\DeclareMathOperator{\Tr}{Tr}
\newtheorem{definition}{Definition}
\newtheorem{definition2}{Lemma}
\newtheorem{definition4}{Theorem}
\newtheorem{Co}{Corollary}
\newtheorem{definition9}{Proposition}
\newtheorem{definition10}{Remark}
\title{Geometric Complexity of Quantum Channels via Unitary Dilations}
\author{Alberto Acevedo, Antonio Falcó \\ 
Departamento de Matemáticas, Física y Ciencias Tecnológicas \\
Universidad Cardenal Herrera-CEU, CEU Universities,\\ 
Calle San Bartolome, 55, Alfara del Patriarca 46115, Valencia, Spain}
\date{}
\begin{document}
\singlespacing

\maketitle

\vspace{0.9cm}
\begin{abstract}
Nielsen's geometric approach to quantum circuit complexity provides a well-established Riemannian
framework for quantifying the cost of implementing \emph{unitary} (closed-system) dynamics. For
\emph{open} dynamics, however, the reduced evolution is described by quantum channels and admits
many inequivalent Stinespring realizations, so that any complexity notion must clarify which
microscopic resources are counted as accessible and which transformations are treated as gauge.

In this work we introduce and analyze a geometric complexity functional for families of quantum
channels based on unitary dilations. We distinguish an \emph{implementation-dependent} complexity,
which depends explicitly on the dilation data, from an \emph{intrinsic} channel complexity obtained
by minimizing over a physically motivated class of admissible dilations (e.g.\ bounded environment
dimension, energy or norm constraints, and penalty structures). The proposed functional is
subtractive: it compares the geometric cost of the total unitary realization with a canonical
surrogate term that removes purely environmental contributions. We justify the subtraction from a
small set of postulates, including closed-system consistency, environment-only neutrality, and
invariance under dilation gauge transformations that leave the channel unchanged. This leads
naturally to a companion quantity, \emph{noise complexity}, which quantifies the loss of geometric
complexity relative to a prescribed ideal closed evolution.

We establish an operational coherence-based lower bound for unitary geometric complexity, derive
structural properties of the channel functional such as linear time scaling under time-homogeneous
dilations, and obtain dissipator-controlled bounds in the Markovian (GKSL/Lindblad) regime under a
standard dilation construction. Finally, we illustrate the framework on canonical benchmark noise
models, including dephasing, amplitude damping, and depolarizing (Pauli) channels, and interpret
the resulting complexity and noise-complexity trends.
\end{abstract}

\noindent\textbf{Keywords:}
geometric quantum complexity; Nielsen complexity geometry; quantum channels; open quantum systems;
Stinespring dilation; implementation-dependent complexity; intrinsic channel complexity;
noise complexity; Lindblad/GKSL semigroups; Markovian dynamics; quantum coherence;
Hilbert--Schmidt geometry.


\section{Introduction}
\label{sec:introduction}

The geometric approach to quantum circuit complexity, pioneered by Nielsen and collaborators, provides a conceptually clean and technically powerful framework for quantifying the cost of implementing \emph{unitary} transformations by endowing \( SU(N) \) with a right-invariant metric that encodes available controls and penalties \cite{Niel, NielGeo}. In the closed-system setting, this viewpoint yields a Riemannian (or Finsler) length functional whose minimizers have a transparent variational interpretation and whose value is directly comparable to gate-counting notions of complexity \cite{Niel, NielGeo}.

Extending this picture to \emph{open} quantum dynamics is, however, intrinsically non-unique. A general noisy evolution of a system is described by a family of quantum channels \( \{\Lambda_t\}_{t \ge 0} \), and any such family may be realized through many inequivalent microscopic implementations, e.g.\ via Stinespring dilations that differ in environment dimension, reservoir preparation, or system–environment coupling \cite{Stinespring, BreuerPetruccione}. In consequence, the notion of “accessible cost” for open dynamics cannot be read off solely from the reduced channel: it depends on which degrees of freedom are regarded as physical resources and which are treated as unobservable gauge. This creates a conceptual gap between the well-established geometric complexity for closed dynamics and a meaningful complexity theory for noisy processes.

The present work addresses this gap by proposing and analyzing a geometric complexity functional for open-system evolutions that makes the above choice explicit. Our starting point is the distinction between (i) \emph{implementation-dependent} complexity, where the microscopic dilation data is part of the definition, and (ii) \emph{intrinsic} channel complexity, obtained by minimizing over a physically motivated class of admissible dilations (e.g.\ bounded environment dimension, energy constraints, locality penalties) \cite{Burgarth, Cramer}. This distinction is essential: without it, one can artificially lower costs by adding irrelevant ancillas or by exploiting dilation gauge freedoms that leave the reduced channel unchanged.

Within this two-layer framework, we introduce a dilation-based complexity functional \( \mathcal{G}(\Lambda_t; \mathfrak{D}) \) that compares the geometric cost of the joint unitary realization \( U_{tot}(t) \) to a canonically defined “environmental” surrogate term. The key feature is a subtractive structure designed to remove complexity contributions that are purely environmental or otherwise invisible at the level of the reduced dynamics, while preserving consistency with the closed-system limit. This leads naturally to a companion quantity that we call \emph{noise complexity}, which quantifies the \emph{loss} of complexity induced by noise relative to an ideal closed reference evolution \( U_S(t) \cite{Schlosshauer, noisecomplexity2022} \).

A central technical point is that the subtraction term is not introduced ad hoc: we justify it from a concise list of postulates reflecting the intended operational meaning. These include (i) \emph{closed-system consistency} (the functional reduces to Nielsen complexity when the channel is unitary), (ii) \emph{environment-only neutrality} (purely environmental evolution should carry no system complexity), and (iii) \emph{stability under dilation gauge} (invariance under environment basis changes that leave the channel unchanged). We further provide a variational/geometric interpretation of the subtraction term, which clarifies why the resulting functional is a natural extension of the closed-system geometry.

This paper is aimed at three overlapping audiences. First, for researchers in quantum information and quantum control interested in complexity measures beyond gate counting, it provides a concrete proposal for quantifying the cost of noisy implementations and the degradation of complexity under dissipation \cite{Lloyd1999, Zurek}. Second, for mathematical physicists working on geometric structures of quantum dynamics, it offers an explicit functional with a clear invariance structure and a tractable Hilbert--Schmidt specialization. Third, for the open quantum systems community, it provides a bridge between Lindblad/GKSL parameters and geometric cost estimates under standard dilation models, including benchmark channels such as dephasing, amplitude damping, and depolarizing noise \cite{Lindblad1976, BreuerPetruccione2002}.

After fixing notation and recalling Nielsen's geometric complexity for unitaries, we define implementation-dependent channel complexity and its intrinsic counterpart obtained by minimization over admissible dilations. We then introduce noise complexity and establish its basic properties (nonnegativity and vanishing in the noiseless limit). Our first main theorem provides an operational lower bound on unitary geometric complexity in terms of basis-dependent coherence production \cite{Baumgratz2014}, which supplies a physically meaningful control parameter for the geometric cost. We also prove structural properties of the new channel functional, including linear time-scaling under time-homogeneous dilation models and explicit dissipator-controlled bounds in the Markovian (GKSL) regime \cite{Lindblad1976,GKS1976,BreuerPetruccione}. Finally, we illustrate the framework on canonical benchmark channels and interpret the resulting noise-complexity trends \cite{noisecomplexity2022,suss1,suss2,SusskindLectures2018}.

The paper is organized as follows. Section~\ref{sec:prelim} fixes notation and conventions. Section~\ref{sec:unitary_complexity} recalls unitary geometric complexity in the Nielsen framework. Section~\ref{sec:defs-channel-noise} introduces implementation-dependent and intrinsic channel complexity, and defines noise complexity. Section~\ref{sec:axioms-subtractive} states the postulates and derives the subtractive term. Section~\ref{sec:main-results} presents the main theorems and structural propositions. Section~\ref{sec:lindblad} treats the Markovian/Lindblad regime, and Section~\ref{sec:benchmarks} collects benchmark examples.

\section{Notations and Conventions}
\label{sec:prelim}

This section fixes notation and standard conventions used throughout the paper. Unless explicitly
stated otherwise, all Hilbert spaces are finite-dimensional and over $\mathbb{C}$, and we set
$\hbar = 1$.

\subsection{Hilbert spaces, operators, and norms}

We denote by $\mathscr{H}$ a finite-dimensional Hilbert space of dimension $\dim(\mathscr{H})=d$.
The space of linear operators on $\mathscr{H}$ is denoted by
$\mathcal{L}(\mathscr{H})$.
The identity operator on $\mathscr{H}$ is denoted by $\mathbf{\hat{I}}_{\mathscr{H}}$
(or simply $\mathbf{\hat{I}}$ when the underlying space is clear).
Adjoints are denoted by ${}^\dagger$.

The set of density operators (quantum states) on $\mathscr{H}$ is
\[
\mathcal{S}(\mathscr{H})
:=
\big\{
\boldsymbol{\hat{\rho}} \in \mathcal{L}(\mathscr{H})
:\;
\boldsymbol{\hat{\rho}} \succeq 0,\;
\operatorname{Tr}(\boldsymbol{\hat{\rho}})=1
\big\}.
\]
We write $\operatorname{Tr}$ for the trace, and $\operatorname{Tr}_{E}$ for the partial trace over an
environment factor $\mathscr{H}_E$ (see below). Commutators are written
$[\mathbf{\hat{A}},\mathbf{\hat{B}}]=\mathbf{\hat{A}}\mathbf{\hat{B}}-\mathbf{\hat{B}}\mathbf{\hat{A}}$.

We use the Hilbert--Schmidt inner product
\[
\langle \mathbf{\hat{A}},\mathbf{\hat{B}} \rangle_{hs}
:=\operatorname{Tr}(\mathbf{\hat{A}}^\dagger \mathbf{\hat{B}}),
\qquad
\|\mathbf{\hat{A}}\|_{hs}
:=\sqrt{\langle \mathbf{\hat{A}},\mathbf{\hat{A}} \rangle_{hs}},
\]
and, when needed, the trace norm $\|\mathbf{\hat{A}}\|_{1}:=\operatorname{Tr}\sqrt{\mathbf{\hat{A}}^\dagger\mathbf{\hat{A}}}$.
All logarithms in entropic quantities are natural logarithms unless indicated otherwise.

\subsection{System--environment split and unitary dynamics}

Open-system dynamics are formulated on a tensor-product Hilbert space
\[
\mathscr{H}=\mathscr{H}_{S}\otimes \mathscr{H}_{E},
\qquad
d_S:=\dim(\mathscr{H}_S),\quad d_E:=\dim(\mathscr{H}_E),\quad d:=d_S d_E.
\]
States of the system $S$ and environment $E$ are $\boldsymbol{\hat{\rho}}_S\in\mathcal{S}(\mathscr{H}_S)$ and
$\boldsymbol{\hat{\rho}}_E\in\mathcal{S}(\mathscr{H}_E)$, respectively, and we assume an initially
uncorrelated state
\begin{equation}
\label{eqn:initial_product_state}
\boldsymbol{\hat{\rho}}_{0}
=
\boldsymbol{\hat{\rho}}_{S}\otimes \boldsymbol{\hat{\rho}}_{E}.
\end{equation}

For a time-independent Hamiltonian $\boldsymbol{\hat{H}}_{tot}$ on $\mathscr{H}$, the total unitary
evolution is $\mathbf{\hat{U}}_{t}=e^{-it\boldsymbol{\hat{H}}_{tot}}$.
When considering a time-dependent Hamiltonian $\boldsymbol{\hat{H}}_{tot}(t)$, the corresponding unitary is
\begin{equation}
\label{eqn:time_ordered_unitary_prelim}
\mathbf{\hat{U}}_{t}
=
\hat{\mathscr{T}}\exp\!\left(-i\int_{0}^{t}\boldsymbol{\hat{H}}_{tot}(s)\,ds\right),
\end{equation}
where $\hat{\mathscr{T}}$ denotes the time-ordering operator.

Whenever convenient, we use the standard decomposition
\begin{equation}
\label{eqn:hamiltonian_split_prelim}
\boldsymbol{\hat{H}}_{tot}
=
\boldsymbol{\hat{H}}_{S}
+
\boldsymbol{\hat{H}}_{I}
+
\boldsymbol{\hat{H}}_{E},
\end{equation}
where $\boldsymbol{\hat{H}}_{S}$ acts non-trivially only on $\mathscr{H}_{S}$,
$\boldsymbol{\hat{H}}_{E}$ acts non-trivially only on $\mathscr{H}_{E}$, and the interaction
$\boldsymbol{\hat{H}}_{I}$ acts on $\mathscr{H}_{S}\otimes\mathscr{H}_{E}$.

\subsection{Quantum channels and Kraus/Stinespring representations}

A (finite-dimensional) quantum channel on $\mathscr{H}_S$ is a linear map
$\Lambda:\mathcal{L}(\mathscr{H}_S)\to\mathcal{L}(\mathscr{H}_S)$ that is completely positive and
trace-preserving (CPTP). A unitary channel has the form
$\Lambda(\boldsymbol{\hat{\rho}})=\mathbf{\hat{U}}\boldsymbol{\hat{\rho}}\mathbf{\hat{U}}^\dagger$.

Every CPTP map admits a Kraus representation
\begin{equation}
\label{eqn:kraus_prelim}
\Lambda(\boldsymbol{\hat{\rho}})
=
\sum_{\alpha} \mathbf{\hat{K}}_{\alpha}\,\boldsymbol{\hat{\rho}}\,\mathbf{\hat{K}}_{\alpha}^\dagger,
\qquad
\sum_{\alpha}\mathbf{\hat{K}}_{\alpha}^\dagger \mathbf{\hat{K}}_{\alpha}
=
\mathbf{\hat{I}}_{\mathscr{H}_S}.
\end{equation}
Equivalently, every CPTP map admits a Stinespring dilation: there exist an environment space
$\mathscr{H}_E$, an environment state $\boldsymbol{\hat{\rho}}_E\in\mathcal{S}(\mathscr{H}_E)$, and a unitary
$\mathbf{\hat{U}}$ on $\mathscr{H}_S\otimes\mathscr{H}_E$ such that
\begin{equation}
\label{eqn:stinespring_prelim}
\Lambda(\boldsymbol{\hat{\rho}}_S)
=
\operatorname{Tr}_E\!\big[
\mathbf{\hat{U}}\,
(\boldsymbol{\hat{\rho}}_S\otimes \boldsymbol{\hat{\rho}}_E)\,
\mathbf{\hat{U}}^\dagger
\big].
\end{equation}
In this paper we consider time-parameterized channels $\{\Lambda_t\}_{t\ge 0}$ obtained by taking
$\mathbf{\hat{U}}=\mathbf{\hat{U}}_t$ from \eqref{eqn:time_ordered_unitary_prelim} (or
$\mathbf{\hat{U}}_t=e^{-it\boldsymbol{\hat{H}}_{tot}}$ in the time-independent case), together with
a fixed choice of initial environment state $\boldsymbol{\hat{\rho}}_E$.

In several parts of the paper (in particular, when defining noise complexity) we compare the reduced
open dynamics to the ideal closed evolution generated solely by $\boldsymbol{\hat{H}}_{S}$. We thus
introduce the corresponding unitary on $\mathscr{H}_S$,
\begin{equation}
\label{eq:ideal-system-unitary}
U_S(t):=e^{-it\boldsymbol{\hat{H}}_{S}},
\end{equation}
and the associated unitary channel $\mathcal{U}_S(t)$ defined by
$\mathcal{U}_S(t)(\boldsymbol{\hat{\rho}}):=U_S(t)\boldsymbol{\hat{\rho}}\,U_S(t)^\dagger$.

\subsection{Dilation ``gauge'' freedom and implementation dependence}

A key structural point is that the dilation \eqref{eqn:stinespring_prelim} is not unique: different
triples $(\mathscr{H}_E,\boldsymbol{\hat{\rho}}_E,\mathbf{\hat{U}})$ may induce the same reduced
channel $\Lambda$. For instance, if $\mathbf{\hat{W}}$ is unitary on $\mathscr{H}_E$ and
$\boldsymbol{\hat{\rho}}_E'=\mathbf{\hat{W}}\boldsymbol{\hat{\rho}}_E\mathbf{\hat{W}}^\dagger$, then
\[
\operatorname{Tr}_E\!\big[
(\mathbf{\hat{I}}_S\otimes\mathbf{\hat{W}})\mathbf{\hat{U}}\,
(\boldsymbol{\hat{\rho}}_S\otimes \boldsymbol{\hat{\rho}}_E)\,
\mathbf{\hat{U}}^\dagger(\mathbf{\hat{I}}_S\otimes\mathbf{\hat{W}}^\dagger)
\big]
=
\operatorname{Tr}_E\!\big[
\mathbf{\hat{U}}\,
(\boldsymbol{\hat{\rho}}_S\otimes \boldsymbol{\hat{\rho}}_E')\,
\mathbf{\hat{U}}^\dagger
\big],
\]
so that a change of environment basis can be absorbed into the preparation of $\boldsymbol{\hat{\rho}}_E$.
More generally, enlargements of $\mathscr{H}_E$ and different Stinespring isometries can also lead
to the same channel.

\medskip
\noindent\textbf{Convention (implementation model).}
Whenever we speak of ``the'' open dynamics $\Lambda_t$ induced by a physical noise mechanism, we
tacitly assume that a specific microscopic implementation---namely a choice of
$(\mathscr{H}_E,\boldsymbol{\hat{\rho}}_E,\boldsymbol{\hat{H}}_{tot})$ (or $(\mathscr{H}_E,\boldsymbol{\hat{\rho}}_E,\boldsymbol{\hat{H}}_{tot}(t))$)---is fixed.
Consequently, any complexity functional built from dilation data should be interpreted as an
\emph{implementation-dependent} cost, unless an explicit optimization over dilations is performed.
To keep notation light we typically suppress this dependence, but it can be restored by writing
$\Lambda_t^{(\mathscr{H}_E,\boldsymbol{\hat{\rho}}_E,\boldsymbol{\hat{H}}_{tot})}$ when needed.

\subsection{Reference basis and coherence conventions}

When discussing coherence-based quantities, we fix once and for all an orthonormal reference basis
$\{|k\rangle\}_{k=1}^{d_S}$ of $\mathscr{H}_S$ (typically the computational basis). The associated
completely dephasing channel is
\begin{equation}
\label{eqn:dephasing_map_prelim}
\mathscr{E}(\boldsymbol{\hat{\rho}})
:=
\sum_{k=1}^{d_S}
|k\rangle\langle k|\,
\boldsymbol{\hat{\rho}}\,
|k\rangle\langle k|.
\end{equation}
States satisfying $\mathscr{E}(\boldsymbol{\hat{\rho}})=\boldsymbol{\hat{\rho}}$ are called
\emph{incoherent} with respect to the chosen basis. In later sections we employ coherence measures
defined relative to $\mathscr{E}$; for instance, one convenient choice (used for algebraic
manipulations) is the linear-entropy-based coherence functional
\begin{equation}
\label{eqn:linear_entropy_coherence_prelim}
C_{\mathscr{E}}(\boldsymbol{\hat{\rho}})
:=
S_L\!\big(\mathscr{E}(\boldsymbol{\hat{\rho}})\big)
-
S_L(\boldsymbol{\hat{\rho}}),
\qquad
S_L(\boldsymbol{\hat{\rho}}):=1-\operatorname{Tr}(\boldsymbol{\hat{\rho}}^{2}).
\end{equation}
The specific coherence functional adopted in each result will be stated explicitly.

\section{Unitary Geometric Complexity}
\label{sec:unitary_complexity}

This section recalls the geometric approach to unitary circuit complexity initiated by Nielsen and
coauthors \cite{Niel,NielGeo} and developed further in, e.g., \cite{dorth1,dorth2,brandt}.  The guiding
principle is to encode physical constraints (locality, available interactions, control costs) into a
right--invariant Riemannian metric on $SU(N)$, and to define the complexity of a target unitary as the
length of the shortest admissible path from the identity to that unitary.

\subsection{Right--invariant Riemannian metrics on $SU(N)$}

Let
\begin{equation}
SU(N)=\big\{U\in\mathbb{C}^{N\times N}:\ U^\dagger U=\mathbb{I}_N,\ \det(U)=1\big\},
\end{equation}
a compact Lie group of real dimension $N^2-1$. Its Lie algebra is
\begin{equation}
\mathfrak{su}(N)=\big\{X\in\mathbb{C}^{N\times N}:\ X^\dagger=-X,\ \mathrm{Tr}(X)=0\big\}.
\end{equation}
Fix a real basis $\{\hat{\mathbf{E}}_i\}_{i=1}^{N^2-1}$ of $\mathfrak{su}(N)$. Every
$\hat{\mathbf{A}}\in\mathfrak{su}(N)$ admits a unique expansion
\begin{equation}
\hat{\mathbf{A}}=\sum_{i=1}^{N^2-1} A_i\,\hat{\mathbf{E}}_i,\qquad A_i\in\mathbb{R},
\end{equation}
and we define the associated coordinate (vectorization) map
\begin{equation}
\mathrm{Vec}(\hat{\mathbf{A}})
:=
\begin{pmatrix}
A_1\\
\vdots\\
A_{N^2-1}
\end{pmatrix}
\in\mathbb{R}^{N^2-1}.
\end{equation}

Let $\Omega\in\mathbb{C}^{(N^2-1)\times(N^2-1)}$ be Hermitian positive definite. We define an inner
product on $\mathfrak{su}(N)$ by
\begin{equation}
\label{eq:Omega_inner_product}
\langle \hat{\mathbf{A}},\hat{\mathbf{B}}\rangle_\Omega
:=
\frac{1}{N^2-1}\,
\mathrm{Vec}(\hat{\mathbf{A}})^\dagger\,
\Omega\,
\mathrm{Vec}(\hat{\mathbf{B}}),
\qquad
\hat{\mathbf{A}},\hat{\mathbf{B}}\in\mathfrak{su}(N).
\end{equation}
(Any nonzero constant prefactor is inessential; we keep $(N^2-1)^{-1}$ for later normalization.)

We extend \eqref{eq:Omega_inner_product} to a right--invariant Riemannian metric on $SU(N)$ as follows.
For $U\in SU(N)$, each tangent vector $X\in T_U SU(N)$ can be written uniquely as
$X=\hat{\mathbf{A}}\,U$ with $\hat{\mathbf{A}}\in\mathfrak{su}(N)$ (right trivialization). We set
\begin{equation}
\label{eq:right_invariant_metric}
g^{(\Omega)}_U(X,Y)
:=
\big\langle \hat{\mathbf{A}},\hat{\mathbf{B}}\big\rangle_\Omega
\qquad
\text{if } X=\hat{\mathbf{A}}U,\ Y=\hat{\mathbf{B}}U.
\end{equation}
Right invariance is immediate: for every $W\in SU(N)$,
$g^{(\Omega)}_{UW}(XW,YW)=g^{(\Omega)}_U(X,Y)$.

Throughout, we will often assume that the chosen basis diagonalizes $\Omega$,
\begin{equation}
\Omega=\mathrm{diag}(l_1,\ldots,l_{N^2-1}),\qquad l_i>0,
\end{equation}
which is natural in applications where one assigns explicit ``penalties'' to specific interaction
directions.  A canonical choice in quantum information is the basis of normalized Pauli strings (for
$N=2^n$).

\subsection{Homogeneous versus anisotropic geometries}

If $\Omega$ is a scalar multiple of the identity, then $g^{(\Omega)}$ is bi--invariant (equivalently,
adjoint--invariant) and $(SU(N),g^{(\Omega)})$ is a homogeneous Riemannian manifold. In that case,
geodesics are one--parameter subgroups,
\begin{equation}
\gamma(s)=\exp\!\big(s\,\hat{\mathbf{A}}\big),
\qquad \hat{\mathbf{A}}\in\mathfrak{su}(N),
\end{equation}
and distances are controlled by minimal--norm logarithms.

If $\Omega$ is not proportional to the identity, the metric is typically only right--invariant. The
anisotropy breaks adjoint symmetry and geodesics need not be one--parameter subgroups; instead, the
right--trivialized velocity satisfies an Euler--Arnold equation (see
Section~\ref{subsec:euler_arnold}).

\subsection{Example: penalizing nonlocal interactions (multi--qubit setting)}

For an $n$--qubit register, $N=2^n$ and $\dim\mathfrak{su}(2^n)=4^n-1$.  Let
$\mathfrak{g}_{\mathrm{loc}}$ be the span of $1$-- and $2$--local Pauli strings and
$\mathfrak{g}_{\mathrm{nl}}$ its orthogonal complement (with respect to the Hilbert--Schmidt product),
so that
\begin{equation}
\mathfrak{su}(2^n)=\mathfrak{g}_{\mathrm{loc}}\oplus \mathfrak{g}_{\mathrm{nl}}.
\end{equation}
A standard locality penalty is realized by choosing a basis adapted to this splitting and setting
\begin{equation}
\label{eq:penalty_matrix}
\Omega_q
=
\mathbb{I}_{\dim \mathfrak{g}_{\mathrm{loc}}}
\ \oplus\
q\,\mathbb{I}_{\dim \mathfrak{g}_{\mathrm{nl}}},
\qquad q>1.
\end{equation}
Then directions corresponding to nonlocal (higher--body) interactions are ``stretched'' by a factor
$q$, and minimal--length paths preferentially use inexpensive local directions. This is precisely the
geometric counterpart of introducing gate penalties in circuit models \cite{Niel,NielGeo}.

\subsection{Length, distance, and unitary geometric complexity}

Let $\gamma:[0,1]\to SU(N)$ be piecewise $C^1$. Its length with respect to $g^{(\Omega)}$ is
\begin{equation}
\label{eq:length}
L_\Omega(\gamma)
:=
\int_0^1 \sqrt{\,g^{(\Omega)}_{\gamma(s)}\!\big(\dot{\gamma}(s),\dot{\gamma}(s)\big)\,}\,ds.
\end{equation}
The induced (Riemannian) distance between $U,V\in SU(N)$ is
\begin{equation}
\label{eq:distance}
\mathscr{D}_\Omega(U,V)
:=
\inf\Big\{L_\Omega(\gamma):\ \gamma(0)=U,\ \gamma(1)=V\Big\}.
\end{equation}
Since $SU(N)$ is compact and $g^{(\Omega)}$ is smooth, $(SU(N),g^{(\Omega)})$ is geodesically complete
and the infimum in \eqref{eq:distance} is attained by at least one minimizing geodesic (Hopf--Rinow).

\begin{definition}[Geometric complexity of a unitary]
\label{def:unitary_complexity}
For $U\in SU(N)$, the (unitary) geometric complexity induced by $\Omega$ is
\begin{equation}
\label{eq:unitary_complexity}
\mathscr{G}_\Omega(U)
:=
\mathscr{D}_\Omega(\mathbb{I}_N,U).
\end{equation}
\end{definition}

Right invariance of the metric implies right invariance of the distance:
\begin{equation}
\label{eq:right_invariance_distance}
\mathscr{D}_\Omega(U,V)=\mathscr{D}_\Omega(\mathbb{I}_N,\,VU^{-1})
\qquad\text{for all }U,V\in SU(N).
\end{equation}
In particular, $\mathscr{G}_\Omega(U)$ measures the minimal cost of synthesizing $U$ from the identity
under the geometric cost encoded by $\Omega$.

\subsection{Control (Hamiltonian) representation of the length functional}
\label{subsec:control_representation}

For a smooth curve $\gamma(s)\in SU(N)$, define its right--trivialized velocity (``body velocity'')
\begin{equation}
\label{eq:body_velocity}
\hat{\mathbf{A}}(s)
:=
\dot{\gamma}(s)\,\gamma(s)^{-1}
=
\dot{\gamma}(s)\,\gamma(s)^\dagger
\in\mathfrak{su}(N).
\end{equation}
Then $\dot{\gamma}(s)=\hat{\mathbf{A}}(s)\gamma(s)$, and by construction of the metric,
\begin{equation}
g^{(\Omega)}_{\gamma(s)}\!\big(\dot{\gamma}(s),\dot{\gamma}(s)\big)
=
\big\langle \hat{\mathbf{A}}(s),\hat{\mathbf{A}}(s)\big\rangle_\Omega.
\end{equation}
Hence the length \eqref{eq:length} becomes
\begin{equation}
\label{eq:length_body_velocity}
L_\Omega(\gamma)
=
\int_0^1 \sqrt{\big\langle \hat{\mathbf{A}}(s),\hat{\mathbf{A}}(s)\big\rangle_\Omega}\,ds.
\end{equation}

It is often convenient to switch to Hermitian Hamiltonians.  Using the identification
$\mathfrak{su}(N)=\{-iH:\ H=H^\dagger,\ \mathrm{Tr}(H)=0\}$, set
\begin{equation}
\label{eq:Hamiltonian_identification}
\hat{\mathbf{A}}(s)=-i\,\hat{\mathbf{H}}(s),
\qquad
\hat{\mathbf{H}}(s)=\hat{\mathbf{H}}(s)^\dagger,\ \mathrm{Tr}\,\hat{\mathbf{H}}(s)=0.
\end{equation}
Then $\gamma$ solves the time--dependent Schr\"odinger equation
\begin{equation}
\label{eq:Schrodinger}
\dot{\gamma}(s)=-i\,\hat{\mathbf{H}}(s)\gamma(s),
\qquad
\gamma(0)=\mathbb{I}_N,
\end{equation}
with formal solution
\begin{equation}
\label{eq:time_ordered}
\gamma(s)=\hat{\mathscr{T}}\exp\!\left(-i\int_0^s \hat{\mathbf{H}}(u)\,du\right),
\end{equation}
where $\hat{\mathscr{T}}$ denotes time ordering. In terms of $\hat{\mathbf{H}}(s)$,
\eqref{eq:length_body_velocity} reads
\begin{equation}
\label{eq:length_H}
L_\Omega(\gamma)
=
\int_0^1 \sqrt{\big\langle -i\hat{\mathbf{H}}(s),-i\hat{\mathbf{H}}(s)\big\rangle_\Omega}\,ds
=
\int_0^1 \sqrt{\big\langle \hat{\mathbf{H}}(s),\hat{\mathbf{H}}(s)\big\rangle_\Omega}\,ds,
\end{equation}
where in the last equality we use that $\langle\cdot,\cdot\rangle_\Omega$ is defined on
$\mathfrak{su}(N)$ and extended by the identification \eqref{eq:Hamiltonian_identification}.

Consequently, Definition~\ref{def:unitary_complexity} is equivalently the optimal--control problem
\begin{equation}
\label{eq:control_formulation}
\mathscr{G}_\Omega(U)
=
\inf_{\hat{\mathbf{H}}(\cdot)}
\left\{
\int_0^1 \sqrt{\big\langle \hat{\mathbf{H}}(s),\hat{\mathbf{H}}(s)\big\rangle_\Omega}\,ds
\ :\
\gamma(1)=U\ \text{with }\gamma \text{ solving \eqref{eq:Schrodinger}}
\right\}.
\end{equation}
This is the precise sense in which the metric $\Omega$ encodes the cost of Hamiltonian directions.

\subsection{Geodesic equation (Euler--Arnold form)}
\label{subsec:euler_arnold}

For completeness, we record the intrinsic geodesic equation associated with a right--invariant
metric on a Lie group, in the Euler--Arnold (or Euler--Poincar\'e) form
\cite{Arnold1966,Milnor1976,MarsdenRatiu1999}.
Let $\hat{\mathbf{A}}(s)=\dot{\gamma}(s)\gamma(s)^{-1}$ be the body velocity
\eqref{eq:body_velocity}. Define the inertia operator
$\mathcal{I}_\Omega:\mathfrak{su}(N)\to\mathfrak{su}(N)$ by requiring
\begin{equation}
\label{eq:inertia_operator}
\langle \hat{\mathbf{A}},\hat{\mathbf{B}}\rangle_\Omega
=
\langle \mathcal{I}_\Omega \hat{\mathbf{A}},\hat{\mathbf{B}}\rangle_{\mathrm{hs}},
\end{equation}
where $\langle X,Y\rangle_{\mathrm{hs}}:=\frac{1}{2}\mathrm{Tr}(X^\dagger Y)$ denotes the
Hilbert--Schmidt pairing (any fixed nondegenerate pairing suffices).
Then $\gamma$ is a geodesic if and only if $\hat{\mathbf{A}}(s)$ satisfies the Euler--Arnold equation
\begin{equation}
\label{eq:euler_arnold}
\frac{d}{ds}\big(\mathcal{I}_\Omega \hat{\mathbf{A}}(s)\big)
=
\mathrm{ad}^*_{\hat{\mathbf{A}}(s)}\big(\mathcal{I}_\Omega \hat{\mathbf{A}}(s)\big),
\end{equation}
where $\mathrm{ad}^*$ is the coadjoint operator associated with the pairing used in
\eqref{eq:inertia_operator} (equivalently, \eqref{eq:euler_arnold} is the reduced geodesic equation
obtained from the Euler--Poincar\'e variational principle) \cite{HolmMarsdenRatiu1998}.
In matrix form, under the standard identifications on $\mathfrak{su}(N)$,
\eqref{eq:euler_arnold} can be written schematically as
\begin{equation}
\label{eq:euler_arnold_commutator}
\frac{d}{ds}\big(\mathcal{I}_\Omega \hat{\mathbf{A}}(s)\big)
=
\big[\mathcal{I}_\Omega \hat{\mathbf{A}}(s),\ \hat{\mathbf{A}}(s)\big],
\end{equation}
which reduces to $\dot{\hat{\mathbf{A}}}(s)=0$ (hence one--parameter subgroups) precisely in the
bi--invariant case $\mathcal{I}_\Omega\propto \mathrm{Id}$ \cite{Milnor1976}.

Equations \eqref{eq:euler_arnold}--\eqref{eq:euler_arnold_commutator} clarify how anisotropic penalty
tensors $\Omega$ lead to nontrivial optimal synthesis trajectories, as in Nielsen's geometry of
quantum computation \cite{Niel,NielGeo}.

\section{Complexity Functionals for Channels: Definitions and Invariances}
\label{sec:defs-channel-noise}

A central structural issue in extending geometric complexity from unitary dynamics to open-system
dynamics is the non-uniqueness of dilations: the same reduced channel may arise from many distinct
triples consisting of an environment, an initial environment state, and a joint unitary evolution.
In this section we therefore distinguish two layers:

\begin{itemize}
\item \emph{Implementation-dependent} (or \emph{dilation-dependent}) notions, where the microscopic
      dilation is regarded as part of the data and the complexity is assigned to a particular
      physical implementation of the channel.
\item \emph{Intrinsic} notions, where one minimizes over a physically motivated class of admissible
      dilations to obtain a channel-level quantity.
\end{itemize}

\subsection{Implementation-dependent complexity}
\label{subsec:impl-dependent}

Fix a finite-dimensional system Hilbert space $\mathscr{H}_S$ and consider a one-parameter family of
quantum channels $(\Lambda_t)_{t\ge 0}$ acting on $\mathcal{S}(\mathscr{H}_S)$.
A (time-independent Hamiltonian) Stinespring dilation of $(\Lambda_t)$ is specified by the data
\begin{equation}
\label{eq:dilation-data}
\mathfrak{D}
:=
\big(\mathscr{H}_E,\ \hat{\rho}_E,\ \hat{H}_{tot}\big),
\qquad
\hat{H}_{tot}=\hat{H}_{tot}^{\dagger}\ \text{on}\ \mathscr{H}_S\otimes \mathscr{H}_E,
\end{equation}
and yields the reduced dynamics
\begin{equation}
\label{eq:channel-from-dilation}
\Lambda_t^{(\mathfrak{D})}(\hat{\rho}_S)
:=
\Tr_E\!\left[
e^{-it\hat{H}_{tot}}\,
(\hat{\rho}_S\otimes \hat{\rho}_E)\,
e^{it\hat{H}_{tot}}
\right].
\end{equation}
Whenever $\Lambda_t^{(\mathfrak{D})}=\Lambda_t$ for all $t$ under consideration, we say that
$\mathfrak{D}$ is a dilation of $\Lambda_t$.

\medskip

\paragraph{Embedding convention.}
When comparing system and total generators, we embed system operators as
\begin{equation}
\label{eq:embed-system}
\hat{H}_S \equiv \hat{H}_S \otimes \mathbb{I}_E
\quad\text{on}\quad
\mathscr{H}_S\otimes \mathscr{H}_E,
\end{equation}
and we use $|A|:=\sqrt{A^{\dagger}A}$ for the operator absolute value.

\medskip

\paragraph{Implementation-dependent channel complexity.}
Let $d_{tot}:=\dim(\mathscr{H}_S\otimes\mathscr{H}_E)$ and keep the Hilbert--Schmidt geometry
($\Omega=\mathbb{I}$) throughout this subsection.
Given a dilation $\mathfrak{D}=(\mathscr{H}_E,\hat{\rho}_E,\hat{H}_{tot})$ of $\Lambda_t$, we define
\begin{equation}
\label{eq:impl-channel-complexity}
\mathcal{G}_{hs}\!\big(\Lambda_t;\mathfrak{D}\big)
:=
\mathscr{G}_{hs}\!\big(e^{-it\hat{H}_{tot}}\big)
-
\mathscr{G}_{hs}\!\left(
e^{-it\sqrt{\left|\hat{H}_{tot}^{\,2}-\hat{H}_S^{\,2}\right|}}
\right),
\end{equation}
where $\mathscr{G}_{hs}$ denotes the unitary geometric complexity
$\mathscr{G}_\Omega(U):=\mathscr{D}_\Omega(\mathbb{I}_N,U)$ from
Definition~\ref{def:unitary_complexity} (Eq.~\eqref{eq:unitary_complexity}),
specialized to the Hilbert--Schmidt geometry $\Omega=\mathbb{I}$ and with
$N=d_{tot}$.
The first term measures the geometric cost of the total unitary evolution realizing the
channel, whereas the subtraction term removes the part that is confined to degrees of freedom
invisible at the level of the reduced system dynamics (in the sense encoded by
$\sqrt{|\hat{H}_{tot}^{2}-\hat{H}_S^{2}|}$).

\medskip

\paragraph{Minimal invariance: environment basis changes.}
The Kraus operators of $\Lambda_t$ depend on the chosen orthonormal basis of $\mathscr{H}_E$, yet the
channel itself does not. The next lemma shows that the implementation-dependent complexity
(\ref{eq:impl-channel-complexity}) is invariant under those basis changes that leave the channel
unchanged.

\begin{definition2}[Invariance under environment basis changes]
\label{lem:env-basis-invariance}
Let $\mathfrak{D}=(\mathscr{H}_E,\hat{\rho}_E,\hat{H}_{tot})$ be a dilation of $\Lambda_t$ and let
$V_E$ be any unitary operator on $\mathscr{H}_E$. Define the transformed dilation data
\begin{equation}
\label{eq:env-basis-transformed-dilation}
\mathfrak{D}'
:=
\Big(\mathscr{H}_E,\ \hat{\rho}'_E,\ \hat{H}'_{tot}\Big),
\qquad
\hat{\rho}'_E:=V_E\hat{\rho}_E V_E^{\dagger},
\qquad
\hat{H}'_{tot}:=(\mathbb{I}_S\otimes V_E)\hat{H}_{tot}(\mathbb{I}_S\otimes V_E^{\dagger}).
\end{equation}
Then $\Lambda_t^{(\mathfrak{D}')}=\Lambda_t^{(\mathfrak{D})}$ for all $t$, and moreover
\begin{equation}
\label{eq:impl-invariance-statement}
\mathcal{G}_{hs}\!\big(\Lambda_t;\mathfrak{D}'\big)
=
\mathcal{G}_{hs}\!\big(\Lambda_t;\mathfrak{D}\big)
\qquad\text{for all }t.
\end{equation}
\end{definition2}

\begin{proof}
The equality of channels follows from the unitary invariance of the partial trace:
writing $W:=\mathbb{I}_S\otimes V_E$ and $U_t:=e^{-it\hat{H}_{tot}}$, we have
$e^{-it\hat{H}'_{tot}} = W U_t W^{\dagger}$ and
\[
\Lambda_t^{(\mathfrak{D}')}(\hat{\rho}_S)
=
\Tr_E\!\left[
W U_t W^{\dagger}(\hat{\rho}_S\otimes V_E\hat{\rho}_E V_E^{\dagger})
W U_t^{\dagger} W^{\dagger}
\right]
=
\Tr_E\!\left[
U_t(\hat{\rho}_S\otimes \hat{\rho}_E)U_t^{\dagger}
\right]
=
\Lambda_t^{(\mathfrak{D})}(\hat{\rho}_S).
\]
For the complexity invariance, note that the Hilbert--Schmidt norm is unitarily invariant, hence
$\|\hat{H}'_{tot}\|_{H.S.}=\|\hat{H}_{tot}\|_{H.S.}$.
Moreover,
$(\hat{H}'_{tot})^{2} = W \hat{H}_{tot}^{2} W^{\dagger}$, and since functional calculus respects
unitary conjugation,
$\sqrt{\left|(\hat{H}'_{tot})^{2}-\hat{H}_{S}^{2}\right|}
=
W\,\sqrt{\left|\hat{H}_{tot}^{2}-\hat{H}_{S}^{2}\right|}\,W^{\dagger}$,
whence also
$\left\|\sqrt{\left|(\hat{H}'_{tot})^{2}-\hat{H}_{S}^{2}\right|}\right\|_{H.S.}
=
\left\|\sqrt{\left|\hat{H}_{tot}^{2}-\hat{H}_{S}^{2}\right|}\right\|_{H.S.}$.
Because $\mathscr{G}_{hs}(e^{-itA})=\frac{t}{\sqrt{d_{tot}^{2}-1}}\|A\|_{H.S.}$ for time-independent
generators, both terms in \eqref{eq:impl-channel-complexity} are unchanged, proving
\eqref{eq:impl-invariance-statement}.
\end{proof}

\medskip

\paragraph{Implementation-dependent noise complexity.}
Let $U_S(t):=e^{-it\hat{H}_S}$ denote the corresponding ideal (closed) system evolution.
Given a dilation $\mathfrak{D}$ realizing $\Lambda_t$, we define
\begin{equation}
\label{eq:impl-noise-complexity}
\mathcal{N}_{hs}\!\big(\Lambda_t;\mathfrak{D}\big)
:=
\left|
\mathcal{G}_{hs}\!\big(\Lambda_t;\mathfrak{D}\big)
-
\mathscr{G}_{hs}\!\big(U_S(t)\big)
\right|.
\end{equation}
This quantity measures the (geometric) loss of complexity relative to the noiseless system dynamics,
given the specific microscopic implementation $\mathfrak{D}$.

\subsection{Intrinsic channel complexity}
\label{subsec:intrinsic}

The implementation-dependent quantity $\mathcal{G}_{hs}(\Lambda_t;\mathfrak{D})$ assigns a cost to a
\emph{particular} microscopic realization of $\Lambda_t$.  In many situations, however, one would
like a channel-level quantity that depends only on $\Lambda_t$ and on a clearly specified set of
\emph{available physical resources}. This naturally leads to an optimization over a class of
admissible dilations.

\subsubsection*{Admissible dilations and resource constraints}

Fix a time window $[0,T]$ (with $T>0$) and a family of channels $(\Lambda_t)_{t\in[0,T]}$ on
$\mathscr{H}_S$. An \emph{admissible dilation} is a triple
\begin{equation}
\label{eq:adm_dilation_data}
\mathfrak{D}
=
\big(\mathscr{H}_E,\ \hat{\rho}_E,\ \hat{H}_{tot}\big)
\end{equation}
such that $\Lambda_t^{(\mathfrak{D})}=\Lambda_t$ for all $t\in[0,T]$, together with explicit
resource constraints. In order to avoid trivial minimizations (e.g.\ by adding irrelevant ancillas
or rescaling generators), we impose cutoffs capturing the intended physical implementation model.

A convenient and flexible choice is to combine (i) an environment dimension bound and (ii) an energy
(or generator norm) bound:
\begin{equation}
\label{eq:resource_constraints}
\dim(\mathscr{H}_E)\le d_E^{\max},
\qquad
\|\hat{H}_{tot}\|_{op}\le J_{\max}.
\end{equation}
If an explicit environment Hamiltonian $\hat{H}_E$ is part of the model, one may instead (or in
addition) constrain the initial energy of the environment preparation, e.g.
\begin{equation}
\label{eq:env_energy_constraint}
\Tr(\hat{\rho}_E\,\hat{H}_E)\le E_{\max}.
\end{equation}
Other constraints (such as locality penalties, bounded interaction rank, or fixed coupling graph)
can be incorporated analogously, depending on the application.

\begin{definition}[Admissible dilation set]
\label{def:admissible-dilation-set}
Given $(\Lambda_t)_{t\in[0,T]}$ and resource parameters $(d_E^{\max},J_{\max},E_{\max})$, we denote by
$\mathfrak{Dil}_{\mathrm{adm}}^{[0,T]}(\Lambda)$ the set of all dilations
$\mathfrak{D}=(\mathscr{H}_E,\hat{\rho}_E,\hat{H}_{tot})$ satisfying:
\begin{enumerate}
\item \textbf{Exact realization on $[0,T]$:} for every $t\in[0,T]$,
      $\Lambda_t^{(\mathfrak{D})}=\Lambda_t$.
\item \textbf{Resource constraints:} \eqref{eq:resource_constraints} holds, and if applicable
      \eqref{eq:env_energy_constraint} holds.
\end{enumerate}
\end{definition}

\subsubsection*{Definition and basic well-posedness}

\begin{definition}[Intrinsic channel complexity]
\label{def:intrinsic-channel-complexity}
The \emph{intrinsic channel complexity} of $(\Lambda_t)_{t\in[0,T]}$ (relative to the admissible set
in Definition~\ref{def:admissible-dilation-set}) is
\begin{equation}
\label{eq:intrinsic-channel-complexity}
\mathcal{G}_{\mathrm{intr}}(\Lambda_t;[0,T])
:=
\inf_{\mathfrak{D}\in \mathfrak{Dil}_{\mathrm{adm}}^{[0,T]}(\Lambda)}
\mathcal{G}_{hs}\!\big(\Lambda_t;\mathfrak{D}\big),
\qquad t\in[0,T].
\end{equation}
When the time window is clear from the context, we write simply $\mathcal{G}_{\mathrm{intr}}(\Lambda_t)$.
\end{definition}

\begin{definition10}[Why constraints are necessary]
\label{rem:constraints-necessary}
Without restricting the admissible class, the minimization in
\eqref{eq:intrinsic-channel-complexity} is typically ill-posed. For example, one may append
physically irrelevant ancillas to the environment, which changes the dimension $d_{tot}=d_S d_E$
appearing in the normalization of $\mathscr{G}_{hs}$ and may artificially lower the cost. Likewise,
if one allows arbitrary rescalings of the total generator while simultaneously changing the time
parametrization, the cost can be made arbitrarily small. The constraints in
Definition~\ref{def:admissible-dilation-set} preclude such degeneracies by fixing the resources that
are considered available.
\end{definition10}

\begin{definition10}[Attainment of the infimum]
\label{rem:attainment}
In finite dimension and under compactness-type restrictions (e.g.\ fixed $d_E^{\max}$ and a bounded
operator norm $\|\hat{H}_{tot}\|_{op}\le J_{\max}$), the admissible set can be chosen so that the
infimum in \eqref{eq:intrinsic-channel-complexity} is attained (after quotienting out the obvious
environment-unitary gauge freedom). We do not pursue the detailed functional-analytic conditions
here, as they depend on the precise admissibility model adopted.
\end{definition10}

\subsubsection*{Intrinsic noise complexity}

For completeness, we also record the intrinsic counterpart of noise complexity. Given an ideal
target system evolution $U_S(t)$ (see \eqref{eq:ideal-system-unitary}), define
\begin{equation}
\label{eq:intrinsic-noise-complexity}
\mathcal{N}_{\mathrm{intr}}(\Lambda_t;[0,T])
:=
\inf_{\mathfrak{D}\in \mathfrak{Dil}_{\mathrm{adm}}^{[0,T]}(\Lambda)}
\mathcal{N}_{hs}\!\big(\Lambda_t;\mathfrak{D}\big),
\qquad t\in[0,T],
\end{equation}
where $\mathcal{N}_{hs}(\Lambda_t;\mathfrak{D})$ is given in \eqref{eq:impl-noise-complexity}.
As discussed above, in situations where a common minimizer exists one may use the simplified form
\[
\mathcal{N}_{\mathrm{intr}}(\Lambda_t;[0,T])
=
\left|
\mathcal{G}_{\mathrm{intr}}(\Lambda_t;[0,T])-\mathscr{G}_{hs}(U_S(t))
\right|.
\]

\section{Postulates and justification of the subtractive term}
\label{sec:axioms-subtractive}

The purpose of this section is twofold. First, we make explicit a set of postulates that any
``channel complexity'' functional derived from a dilation should satisfy. Second, we show that,
within the Hilbert--Schmidt geometry adopted in this work, these postulates naturally lead to the
subtractive term
\[
\sqrt{\big|\hat{H}_{tot}^{\,2}-\hat{H}_S^{\,2}\big|},
\]
and thus to the definition introduced in \eqref{eq:impl-channel-complexity}.

Throughout this section we work in the time-independent setting for clarity; the time-dependent
extension is discussed at the end.

\subsection{Postulates for dilation-based channel complexity}
\label{subsec:axioms}

Fix a dilation implementation
$\mathfrak{D}=(\mathscr{H}_E,\hat{\rho}_E,\hat{H}_{tot})$
of a channel family $(\Lambda_t)_{t\ge 0}$, and write
$U_{tot}(t):=e^{-it\hat{H}_{tot}}$ for the corresponding total unitary.
We embed system operators as $\hat{H}_S\equiv \hat{H}_S\otimes\mathbb{I}_E$.

The starting point of Nielsen's geometric approach is that, in the Hilbert--Schmidt geometry,
the unitary complexity for time-independent generators is proportional to the Hilbert--Schmidt norm
of the generator (up to a fixed normalization depending on the dimension).
In particular, any ``subtraction'' at the unitary level corresponds to subtracting an
\emph{effective generator norm}.

\begin{definition}[Postulates for a dilation-based channel complexity]
\label{def:desiderata}
Let $\mathcal{G}_{hs}(\Lambda_t;\mathfrak{D})$ be a real-valued functional associated with a dilation
$\mathfrak{D}$ of $\Lambda_t$. We require the following properties.

\begin{enumerate}
\item \textbf{(P1) Closed-system consistency.}
If the dilation is trivial (no environment) or, more generally, if the reduced dynamics is exactly
unitary on the system, $\Lambda_t=\mathcal{U}_S(t)$, then
\[
\mathcal{G}_{hs}(\Lambda_t;\mathfrak{D})=\mathscr{G}_{hs}(U_S(t)).
\]

\item \textbf{(P2) Environment-only neutrality.}
If the total evolution acts only on the environment, i.e.\ $\hat{H}_{tot}=\mathbb{I}_S\otimes \hat{H}_E$
(and in particular $\Lambda_t=\mathrm{Id}$ on the system), then
\[
\mathcal{G}_{hs}(\Lambda_t;\mathfrak{D})=0.
\]

\item \textbf{(P3) Stability under dilation gauge (environment basis changes).}
If $V_E$ is any unitary on $\mathscr{H}_E$ and
$\hat{H}_{tot}'=(\mathbb{I}_S\otimes V_E)\hat{H}_{tot}(\mathbb{I}_S\otimes V_E^\dagger)$,
$\hat{\rho}_E'=V_E\hat{\rho}_E V_E^\dagger$, so that the reduced channel is unchanged,
then the complexity should be invariant:
\[
\mathcal{G}_{hs}(\Lambda_t;\mathfrak{D}')=\mathcal{G}_{hs}(\Lambda_t;\mathfrak{D}).
\]

\item \textbf{(P4) Variational/geometric interpretation.}
There should exist an ``environmental surrogate generator'' $\hat{H}_{\mathrm{env}}^\star$
constructed from $(\hat{H}_{tot},\hat{H}_S)$, invariant under the gauge in (P3), such that
\[
\mathcal{G}_{hs}(\Lambda_t;\mathfrak{D})
=
\mathscr{G}_{hs}\!\big(e^{-it\hat{H}_{tot}}\big)
-
\mathscr{G}_{hs}\!\big(e^{-it\hat{H}_{\mathrm{env}}^\star}\big),
\]
and $\hat{H}_{\mathrm{env}}^\star$ is obtained from a natural minimization principle that captures
``the least environmental cost compatible with the deviation from the system-only generator''.
\end{enumerate}
\end{definition}

The nontrivial point is (P4): it singles out a subtraction term through a variational problem,
which we now specify.

\subsection{A variational characterization of the subtractive term}
\label{subsec:variational-subtractive}

In the Hilbert--Schmidt geometry, the complexity of $e^{-it\hat{H}}$ depends only on $\|\hat{H}\|_{hs}$
(and the relevant dimension normalization). Hence, to isolate an ``environmental'' contribution from
$\hat{H}_{tot}$ we seek an operator $\hat{K}$ built from $(\hat{H}_{tot},\hat{H}_S)$ whose
Hilbert--Schmidt norm measures the part of the total generator that is invisible at the level of
the ideal system generator.

The operator
\[
\hat{X}:=\big|\hat{H}_{tot}^{\,2}-\hat{H}_S^{\,2}\big|
\]
is a canonical gauge-invariant, positive semidefinite quantity measuring the discrepancy between
the squared total generator and the squared system generator. It satisfies:

\begin{definition2}[Gauge covariance of the squared discrepancy]
\label{lem:gauge-covariance-discrepancy}
Let $W=\mathbb{I}_S\otimes V_E$ with $V_E$ unitary on $\mathscr{H}_E$, and set
$\hat{H}_{tot}'=W\hat{H}_{tot}W^\dagger$. Then
\[
\big|\ (\hat{H}_{tot}')^2-\hat{H}_S^2\ \big|
=
W\ \big|\ \hat{H}_{tot}^{\,2}-\hat{H}_S^{\,2}\ \big|\ W^\dagger,
\qquad
\sqrt{\big|\ (\hat{H}_{tot}')^2-\hat{H}_S^2\ \big|}
=
W\ \sqrt{\big|\ \hat{H}_{tot}^{\,2}-\hat{H}_S^{\,2}\ \big|}\ W^\dagger.
\]
\end{definition2}

\begin{proof}
Since $(\hat{H}_{tot}')^2=W\hat{H}_{tot}^{\,2}W^\dagger$ and $W$ commutes with $\hat{H}_S$ (because
$\hat{H}_S\equiv \hat{H}_S\otimes \mathbb{I}_E$), we have
$(\hat{H}_{tot}')^2-\hat{H}_S^2=W(\hat{H}_{tot}^{\,2}-\hat{H}_S^2)W^\dagger$.
The statements for $|\cdot|$ and $\sqrt{\cdot}$ follow from the functional calculus for normal
operators and the fact that $f(WAW^\dagger)=Wf(A)W^\dagger$ for any Borel function $f$.
\end{proof}

We now formalize the minimization principle suggested in (P4).

\begin{definition}[Environmental surrogate via a minimal square-root principle]
\label{def:env-surrogate-min}
Let $\hat{X}:=\big|\hat{H}_{tot}^{\,2}-\hat{H}_S^{\,2}\big| \succeq 0$. Consider the admissible set
\[
\mathcal{A}(\hat{X})
:=
\big\{\,\hat{K}=\hat{K}^\dagger\ :\ \hat{K}^{\,2}=\hat{X}\,\big\}.
\]
Any $\hat{K}\in\mathcal{A}(\hat{X})$ may be interpreted as a generator whose squared action reproduces
the discrepancy encoded by $\hat{X}$. We define the \emph{environmental surrogate generator} as any
minimizer of
\begin{equation}
\label{eq:variational-surrogate}
\hat{H}_{\mathrm{env}}^\star
\in
\arg\min_{\hat{K}\in\mathcal{A}(\hat{X})}\ \|\hat{K}\|_{hs}.
\end{equation}
\end{definition}

The next proposition shows that this variational principle uniquely selects the positive square
root.

\begin{definition9}[Solution of the variational problem]
\label{prop:variational-solution}
Let $\hat{X}\succeq 0$ be Hermitian. Then the minimizers of \eqref{eq:variational-surrogate} are
precisely $\hat{K}=\pm \sqrt{\hat{X}}$, and in particular one may choose
\begin{equation}
\label{eq:env-surrogate-sqrt}
\hat{H}_{\mathrm{env}}^\star
=
\sqrt{\hat{X}}
=
\sqrt{\big|\hat{H}_{tot}^{\,2}-\hat{H}_S^{\,2}\big|}.
\end{equation}
Moreover, $\|\sqrt{\hat{X}}\|_{hs}=\|\hat{K}\|_{hs}$ for every $\hat{K}\in\mathcal{A}(\hat{X})$.
\end{definition9}

\begin{proof}
Let $\hat{X}=\sum_j \lambda_j |j\rangle\langle j|$ be the spectral decomposition with $\lambda_j\ge 0$.
If $\hat{K}\in\mathcal{A}(\hat{X})$, then $\hat{K}$ is Hermitian and $\hat{K}^2=\hat{X}$ implies that
$\hat{K}$ commutes with $\hat{X}$ and is diagonal in the same eigenbasis, with eigenvalues
$\mu_j\in\mathbb{R}$ satisfying $\mu_j^2=\lambda_j$. Hence $\mu_j=\pm \sqrt{\lambda_j}$ and
\[
\|\hat{K}\|_{hs}^2=\Tr(\hat{K}^2)=\Tr(\hat{X})=\sum_j \lambda_j
=
\Tr\big((\sqrt{\hat{X}})^2\big)=\|\sqrt{\hat{X}}\|_{hs}^2.
\]
Therefore every admissible $\hat{K}$ has the same Hilbert--Schmidt norm, and one may select the
canonical representative $\sqrt{\hat{X}}$ (or $-\sqrt{\hat{X}}$), proving \eqref{eq:env-surrogate-sqrt}.
\end{proof}

\subsection{Derivation of the subtraction term and verification of the Postulates }
\label{subsec:derive-subtractive}

We are now in a position to derive the subtractive term in
\eqref{eq:impl-channel-complexity} from the Postulates given in Definition~\ref{def:desiderata}.

\begin{definition4}[Derivation under Hilbert--Schmidt geometry]
\label{thm:derive-subtractive}
Assume the unitary geometric complexity is measured in the Hilbert--Schmidt geometry and that the
total evolution is generated by a time-independent $\hat{H}_{tot}=\hat{H}_{tot}^\dagger$ on
$\mathscr{H}_S\otimes\mathscr{H}_E$. Let $\hat{H}_{\mathrm{env}}^\star$ be defined by the variational
principle \eqref{eq:variational-surrogate}. Then the functional
\begin{equation}
\label{eq:channel-complexity-derived}
\mathcal{G}_{hs}\!\big(\Lambda_t;\mathfrak{D}\big)
:=
\mathscr{G}_{hs}\!\big(e^{-it\hat{H}_{tot}}\big)
-
\mathscr{G}_{hs}\!\big(e^{-it\hat{H}_{\mathrm{env}}^\star}\big)
\end{equation}
satisfies (P1)--(P4) in Definition~\ref{def:desiderata}, and the subtraction term is canonically
given by
\begin{equation}
\label{eq:subtractive-term-final}
\hat{H}_{\mathrm{env}}^\star
=
\sqrt{\big|\hat{H}_{tot}^{\,2}-\hat{H}_S^{\,2}\big|}.
\end{equation}
\end{definition4}

\begin{proof}
By Proposition~\ref{prop:variational-solution} we have the explicit expression
\eqref{eq:subtractive-term-final}, which establishes the claimed form of the subtraction term and
the variational/geometric interpretation (P4).

\emph{(P3) Gauge stability.}
Lemma~\ref{lem:gauge-covariance-discrepancy} shows that under $\hat{H}_{tot}\mapsto \hat{H}_{tot}'=
(\mathbb{I}_S\otimes V_E)\hat{H}_{tot}(\mathbb{I}_S\otimes V_E^\dagger)$ one has
$\hat{H}_{\mathrm{env}}^\star\mapsto (\mathbb{I}_S\otimes V_E)\hat{H}_{\mathrm{env}}^\star
(\mathbb{I}_S\otimes V_E^\dagger)$. Since $\mathscr{G}_{hs}(e^{-it\hat{H}})$ depends only on the
Hilbert--Schmidt norm of $\hat{H}$ and this norm is unitarily invariant, both terms in
\eqref{eq:channel-complexity-derived} are invariant.

\emph{(P1) Closed-system consistency.}
If $\hat{H}_{tot}=\hat{H}_S$ (no coupling/no environment), then
$\big|\hat{H}_{tot}^{\,2}-\hat{H}_S^{\,2}\big|=0$ and hence $\hat{H}_{\mathrm{env}}^\star=0$.
Therefore $\mathcal{G}_{hs}(\Lambda_t;\mathfrak{D})=\mathscr{G}_{hs}(e^{-it\hat{H}_S})=
\mathscr{G}_{hs}(U_S(t))$.

\emph{(P2) Environment-only neutrality.}
If $\hat{H}_{tot}=\mathbb{I}_S\otimes \hat{H}_E$ and $\hat{H}_S=0$, then
$\hat{H}_{\mathrm{env}}^\star=\sqrt{|\hat{H}_{tot}^2|}=|\hat{H}_{tot}|$.
Since $\||\hat{H}_{tot}|\|_{hs}^2=\Tr(|\hat{H}_{tot}|^2)=\Tr(\hat{H}_{tot}^2)=\|\hat{H}_{tot}\|_{hs}^2$,
the two unitary complexities coincide and \eqref{eq:channel-complexity-derived} yields
$\mathcal{G}_{hs}(\Lambda_t;\mathfrak{D})=0$ as required.
\end{proof}

\begin{definition10}[Time-dependent extension]
\label{rem:time-dependent-extension-subtractive}
If the total Hamiltonian depends on time, $\hat{H}_{tot}=\hat{H}_{tot}(t)$, the same postulates can
be imposed at the level of the instantaneous generator. One then defines the surrogate
\[
\hat{H}_{\mathrm{env}}^\star(t)
:=
\sqrt{\big|\hat{H}_{tot}(t)^{2}-\hat{H}_{S}(t)^{2}\big|},
\]
and the corresponding subtraction term is integrated along the path, in direct analogy with the
control representation of unitary complexity in Section~\ref{sec:unitary_complexity}.
\end{definition10}

\section{Main results}
\label{sec:main-results}

This section collects the principal results of the paper. We begin with a coherence-based lower
bound on geometric complexity (Theorem~\ref{thm:coherence-lower-bound}, corresponding to Theorem~2 in
the original numbering), which provides an operational meaning for $\mathscr{G}_{hs}$ and, by
extension, for the implementation-dependent channel complexity
$\mathcal{G}_{hs}(\Lambda_t;\mathfrak{D})$. We then prove a structural proposition describing the
time-scaling behavior of $\mathcal{G}_{hs}$ along Markovian semigroups under a natural
time-homogeneity assumption.

\subsection{Coherence lower bound for unitary geometric complexity}
\label{subsec:coherence-lower-bound}

We recall that coherence is defined relative to the fixed reference basis in
Section~\ref{sec:prelim} (see \eqref{eqn:dephasing_map_prelim}). For definiteness, we work with the
linear-entropy coherence functional
\[
C_{\mathscr{E}}(\hat{\rho})
=
S_L\!\big(\mathscr{E}(\hat{\rho})\big)-S_L(\hat{\rho}),
\qquad
S_L(\hat{\rho})=1-\Tr(\hat{\rho}^2),
\]
introduced in \eqref{eqn:linear_entropy_coherence_prelim}.

\medskip

\paragraph{Roadmap and interpretation.}
The proof proceeds in two steps. First, one shows that the infinitesimal rate of change of the
coherence $C_{\mathscr{E}}(\hat{\rho}(t))$ under a unitary evolution
$\hat{\rho}(t)=U(t)\hat{\rho}(0)U(t)^\dagger$ can be controlled by a commutator expression involving
the dephasing map $\mathscr{E}$ and the Hamiltonian generator.
Second, one bounds this commutator by the Hilbert--Schmidt norm of the generator, thereby relating
the total coherence variation over a time interval to the length functional defining
$\mathscr{G}_{hs}(U(t))$.

Operationally, the resulting inequality states that \emph{any unitary that generates a prescribed
amount of coherence (or destroys it) must have geometric complexity at least proportional to that
coherence change}. This provides a physically meaningful lower bound in terms of a basis-dependent,
but experimentally accessible, resource.

\begin{definition4}[Coherence lower bound for geometric complexity]
\label{thm:coherence-lower-bound}
Let $U(t)=e^{-it\hat{H}}$ be a unitary evolution on $\mathscr{H}$ with a time-independent Hermitian
generator $\hat{H}=\hat{H}^\dagger$, and let $\mathscr{E}$ be the complete dephasing channel in the
fixed reference basis. Then, for every initial state $\hat{\rho}\in\mathcal{S}(\mathscr{H})$ and
every $t\ge 0$, the unitary geometric complexity in the Hilbert--Schmidt geometry satisfies
\begin{equation}
\label{eq:coherence-lower-bound}
\mathscr{G}_{hs}\!\big(U(t)\big)
\;\ge\;
\frac{1}{\sqrt{d^{2}-1}}\,
\frac{1}{\|\hat{H}\|_{hs}}\,
\Big|\,
C_{\mathscr{E}}\!\big(U(t)\hat{\rho}U(t)^\dagger\big)-C_{\mathscr{E}}(\hat{\rho})
\Big| ,
\end{equation}
and consequently
\begin{equation}
\label{eq:coherence-lower-bound-sup}
\mathscr{G}_{hs}\!\big(U(t)\big)
\;\ge\;
\frac{1}{\sqrt{d^{2}-1}}\,
\sup_{\hat{\rho}\in\mathcal{S}(\mathscr{H})}
\Big|\,
C_{\mathscr{E}}\!\big(U(t)\hat{\rho}U(t)^\dagger\big)-C_{\mathscr{E}}(\hat{\rho})
\Big|.
\end{equation}
In particular, the right-hand side controls the cohering and decohering power of $U(t)$, yielding a
basis-dependent operational lower bound for $\mathscr{G}_{hs}(U(t))$.
\end{definition4}

\begin{proof}[Proof sketch]
The detailed proof is given in Appendix~\ref{sec:proofs-coherence} for completeness.
Here we outline the argument.

Let $\hat{\rho}(t)=U(t)\hat{\rho}U(t)^\dagger$. By differentiating $S_L(\hat{\rho}(t))$ and
$S_L(\mathscr{E}(\hat{\rho}(t)))$ and using $\dot{\hat{\rho}}(t)=-i[\hat{H},\hat{\rho}(t)]$, one
obtains an identity of the form
\[
\frac{d}{dt}C_{\mathscr{E}}(\hat{\rho}(t))
=
\Tr\!\Big(\,[\mathscr{E}(\hat{\rho}(t)),\,\hat{\rho}(t)]\,\hat{H}\Big),
\]
up to a fixed universal prefactor depending only on the chosen normalization of $S_L$.
Applying Cauchy--Schwarz in the Hilbert--Schmidt pairing gives
\[
\left|\frac{d}{dt}C_{\mathscr{E}}(\hat{\rho}(t))\right|
\le
\big\|[\mathscr{E}(\hat{\rho}(t)),\hat{\rho}(t)]\big\|_{hs}\,\|\hat{H}\|_{hs}.
\]
Integrating over $[0,t]$ and using the fact that in the Hilbert--Schmidt geometry
$\mathscr{G}_{hs}(U(t))=\frac{t}{\sqrt{d^2-1}}\|\hat{H}\|_{hs}$ for time-independent generators yields
\eqref{eq:coherence-lower-bound}. Taking the supremum over $\hat{\rho}$ gives
\eqref{eq:coherence-lower-bound-sup}.
\end{proof}

\subsection{A structural proposition: time scaling along semigroups}
\label{subsec:time-scaling-semigroups}

We next record a structural property of the implementation-dependent channel complexity in the
Hilbert--Schmidt geometry for time-homogeneous evolutions. The statement formalizes the intuition
that, when the microscopic dilation is generated by a time-independent Hamiltonian and the system
reference dynamics is generated by a time-independent $\hat{H}_S$, the functional
$\mathcal{G}_{hs}(\Lambda_t;\mathfrak{D})$ scales linearly in time.

\begin{definition9}[Linear time scaling for time-homogeneous dilations]
\label{prop:linear-time-scaling}
Assume that the dilation data $\mathfrak{D}=(\mathscr{H}_E,\hat{\rho}_E,\hat{H}_{tot})$ is
time-independent and generates the channel family $\Lambda_t^{(\mathfrak{D})}$ via
\eqref{eq:channel-from-dilation}. Assume moreover that the system Hamiltonian $\hat{H}_S$ is
time-independent and is embedded as $\hat{H}_S\equiv \hat{H}_S\otimes\mathbb{I}_E$.
Then for all $t\ge 0$,
\begin{equation}
\label{eq:linear-time-scaling}
\mathcal{G}_{hs}\!\big(\Lambda_t;\mathfrak{D}\big)
=
\frac{t}{\sqrt{d_{tot}^{2}-1}}
\left(
\|\hat{H}_{tot}\|_{hs}
-
\left\|\sqrt{\big|\hat{H}_{tot}^{\,2}-\hat{H}_{S}^{\,2}\big|}\right\|_{hs}
\right),
\end{equation}
and in particular $\mathcal{G}_{hs}(\Lambda_t;\mathfrak{D})$ is a nonnegative, positively
homogeneous function of time:
\begin{equation}
\label{eq:positive-homogeneity}
\mathcal{G}_{hs}\!\big(\Lambda_{ct};\mathfrak{D}\big)=c\,\mathcal{G}_{hs}\!\big(\Lambda_t;\mathfrak{D}\big)
\qquad\text{for all }c\ge 0.
\end{equation}
\end{definition9}

\begin{proof}
By definition \eqref{eq:impl-channel-complexity},
\[
\mathcal{G}_{hs}\!\big(\Lambda_t;\mathfrak{D}\big)
=
\mathscr{G}_{hs}(e^{-it\hat{H}_{tot}})
-
\mathscr{G}_{hs}\!\left(e^{-it\sqrt{|\hat{H}_{tot}^{2}-\hat{H}_S^{2}|}}\right).
\]
For time-independent generators, the Hilbert--Schmidt geometric complexity satisfies
$\mathscr{G}_{hs}(e^{-it\hat{H}})=\frac{t}{\sqrt{d_{tot}^2-1}}\|\hat{H}\|_{hs}$
(with the appropriate total dimension $d_{tot}$), hence
\eqref{eq:linear-time-scaling} follows immediately, and \eqref{eq:positive-homogeneity} is a direct
consequence.
\end{proof}

\begin{definition10}[Comment on semigroup versus dilation homogeneity]
\label{rem:semigroup-comment}
Proposition~\ref{prop:linear-time-scaling} is a statement about \emph{time-homogeneous dilations}
(time-independent $\hat{H}_{tot}$). If $\Lambda_t$ is only known abstractly to form a GKSL/Lindblad
semigroup, linear scaling in $t$ need not hold for an arbitrary dilation unless one selects a
specific homogeneous dilation model. This distinction clarifies why the functional
$\mathcal{G}_{hs}(\Lambda_t;\mathfrak{D})$ should be viewed as implementation-dependent unless an
intrinsic minimization over dilations is performed.
\end{definition10}

\subsection{Consequences for noise complexity}
\label{subsec:main-noise-consequence}

Combining the coherence lower bound (Theorem~\ref{thm:coherence-lower-bound}) with the definition of
noise complexity \eqref{eq:intrinsic-noise-complexity}, one obtains immediate lower bounds on the
noise complexity in terms of cohering/decohering power. In particular, if $U_S(t)$ is fixed as the
ideal reference evolution, then any deviation of $\Lambda_t$ (under the chosen implementation) that
reduces the attainable coherence variation forces $\mathcal{N}_{hs}(\Lambda_t;\mathfrak{D})$ to be
nonzero, thus quantifying the geometric loss induced by noise.

\section{Complexity Bounds in the GKSL Regime}
\label{sec:lindblad}

This section provides a focused connection between our geometric complexity functionals and the
Markovian regime described by GKSL (Lindblad) generators. The key point is that a Lindblad
semigroup admits a \emph{homogeneous unitary dilation} on a larger Hilbert space after coupling the
system to a bosonic reservoir. Under a standard weak-coupling/Markovian dilation construction, one
can relate the instantaneous growth of the implementation-dependent complexity
$\mathcal{G}_{hs}(\Lambda_t;\mathfrak{D})$ to the dissipator strength (through operator norms of the
Lindblad operators) and to the Hamiltonian drift.

We emphasize that the bounds below are necessarily \emph{model-dependent}, because the dilation is
not unique. Our goal is to state a robust and concrete estimate under a canonical dilation model.

\subsection{GKSL generators and a canonical dilation scale}

Let $(\Lambda_t)_{t\ge 0}$ be a quantum dynamical semigroup on $\mathscr{H}_S$ with GKSL generator
$\mathcal{L}$, i.e.\ for every density matrix $\hat{\rho}$,
\begin{equation}
\label{eq:gksl}
\frac{d}{dt}\Lambda_t(\hat{\rho})=\mathcal{L}\big(\Lambda_t(\hat{\rho})\big),\qquad
\Lambda_0=\mathrm{Id},
\end{equation}
with
\begin{equation}
\label{eq:gksl-form}
\mathcal{L}(\hat{\rho})
=
-i[\hat{H}_S,\hat{\rho}]
+
\sum_{\alpha=1}^{m}
\left(
\hat{L}_\alpha \hat{\rho}\,\hat{L}_\alpha^\dagger
-\frac{1}{2}\big\{\hat{L}_\alpha^\dagger \hat{L}_\alpha,\hat{\rho}\big\}
\right),
\end{equation}
where $\hat{H}_S=\hat{H}_S^\dagger$ and $\{\hat{L}_\alpha\}_{\alpha=1}^m\subset\mathcal{L}(\mathscr{H}_S)$
are Lindblad operators.

A canonical \emph{scale parameter} controlling the dissipator strength is
\begin{equation}
\label{eq:dissipator-scale}
\Gamma
:=
\sum_{\alpha=1}^{m}\|\hat{L}_\alpha\|_{op}^2,
\end{equation}
which is invariant under unitary mixing of the $\hat{L}_\alpha$'s and naturally appears in
contractivity and continuity estimates for GKSL evolutions.

\subsection{A concrete bound via a standard homogeneous dilation}

We now state a bound under a standard Markovian dilation model. Concretely, we consider a
Hudson--Parthasarathy type dilation (quantum stochastic unitary) or, equivalently for our purposes,
a finite-dimensional surrogate obtained by restricting the reservoir to a large but finite
truncation over a time window $[0,T]$. In such constructions, the system couples linearly to the
field via operators $\hat{L}_\alpha$, and the microscopic Hamiltonian has the schematic form
\begin{equation}
\label{eq:standard-dilation-H}
\hat{H}_{tot}
=
\hat{H}_S\otimes\mathbb{I}_E
+
\mathbb{I}_S\otimes \hat{H}_E
+
\sum_{\alpha=1}^m \left(\hat{L}_\alpha\otimes \hat{B}_\alpha^\dagger + \hat{L}_\alpha^\dagger\otimes \hat{B}_\alpha\right),
\end{equation}
where $\hat{B}_\alpha$ are (truncated) bath operators normalized so that their contribution induces
the GKSL dissipator in the Markovian limit. In finite-dimensional truncations, one typically has
$\|\hat{B}_\alpha\|_{op}\le \beta$ for some model-dependent constant $\beta$ (depending on the
cutoff/truncation and the chosen time discretization).

\begin{definition2}[Complexity growth rate bound under a standard dilation]
\label{lem:lindblad-growth-bound}
Assume that $(\Lambda_t)_{t\in[0,T]}$ is generated by the GKSL operator \eqref{eq:gksl-form} and is
realized on $[0,T]$ by a time-homogeneous dilation $\mathfrak{D}$ with a total Hamiltonian of the
form \eqref{eq:standard-dilation-H}, where $\|\hat{B}_\alpha\|_{op}\le \beta$ for all $\alpha$.
Then the implementation-dependent channel complexity satisfies the linear growth bound
\begin{equation}
\label{eq:lindblad-growth-bound}
\mathcal{G}_{hs}\!\big(\Lambda_t;\mathfrak{D}\big)
\le
\frac{t}{\sqrt{d_{tot}^{2}-1}}
\left(
\|\hat{H}_S\otimes \mathbb{I}_E\|_{hs}
+
2\beta \sum_{\alpha=1}^{m}\|\hat{L}_\alpha\|_{hs}
+
\|\mathbb{I}_S\otimes \hat{H}_E\|_{hs}
\right),
\qquad t\in[0,T].
\end{equation}
In particular, if $\hat{H}_E$ is centered in the chosen gauge (or absorbed into the subtraction term
as in Section~\ref{sec:axioms-subtractive}), one obtains the reduced estimate
\begin{equation}
\label{eq:lindblad-growth-bound-reduced}
\mathcal{G}_{hs}\!\big(\Lambda_t;\mathfrak{D}\big)
\lesssim
\frac{t}{\sqrt{d_{tot}^{2}-1}}
\left(
\|\hat{H}_S\|_{hs}\sqrt{d_E}
+
2\beta \sum_{\alpha=1}^{m}\|\hat{L}_\alpha\|_{hs}
\right),
\qquad t\in[0,T].
\end{equation}
\end{definition2}

\begin{proof}
Since the dilation is time-homogeneous, Proposition~\ref{prop:linear-time-scaling} applies and gives
\[
\mathcal{G}_{hs}\!\big(\Lambda_t;\mathfrak{D}\big)
=
\frac{t}{\sqrt{d_{tot}^{2}-1}}
\left(
\|\hat{H}_{tot}\|_{hs}
-
\left\|\sqrt{\big|\hat{H}_{tot}^{\,2}-\hat{H}_{S}^{\,2}\big|}\right\|_{hs}
\right)
\le
\frac{t}{\sqrt{d_{tot}^{2}-1}}
\|\hat{H}_{tot}\|_{hs},
\]
using nonnegativity of the subtraction term.

We bound $\|\hat{H}_{tot}\|_{hs}$ using \eqref{eq:standard-dilation-H} and the triangle inequality:
\[
\|\hat{H}_{tot}\|_{hs}
\le
\|\hat{H}_S\otimes\mathbb{I}_E\|_{hs}
+
\|\mathbb{I}_S\otimes \hat{H}_E\|_{hs}
+
\sum_{\alpha=1}^m \|\hat{L}_\alpha\otimes \hat{B}_\alpha^\dagger + \hat{L}_\alpha^\dagger\otimes \hat{B}_\alpha\|_{hs}.
\]
Moreover, $\|A\otimes B\|_{hs}=\|A\|_{hs}\|B\|_{hs}\le \|A\|_{hs}\sqrt{d}\,\|B\|_{op}$ with $d$ the
dimension of the relevant factor. Applying this with $\|\hat{B}_\alpha\|_{op}\le \beta$ and using
$\|X+X^\dagger\|_{hs}\le 2\|X\|_{hs}$ yields the asserted bound \eqref{eq:lindblad-growth-bound}. The
reduced estimate \eqref{eq:lindblad-growth-bound-reduced} follows from
$\|\hat{H}_S\otimes\mathbb{I}_E\|_{hs}=\|\hat{H}_S\|_{hs}\|\mathbb{I}_E\|_{hs}=\|\hat{H}_S\|_{hs}\sqrt{d_E}$,
and by absorbing/centering $\hat{H}_E$ in the chosen gauge.
\end{proof}

\begin{Co}[Dissipator-driven scaling and a coarse intrinsic estimate]
\label{cor:lindblad-dissipator-scaling}
Under the assumptions of Lemma~\ref{lem:lindblad-growth-bound}, suppose furthermore that
$\|\hat{H}_S\|_{hs}$ is fixed and that the bath normalization is chosen so that $\beta$ is a constant
independent of the Lindblad operators (as in standard weak-coupling scalings). Then the complexity
growth rate is controlled, up to multiplicative constants depending on the dilation model, by the
aggregate dissipator scale:
\begin{equation}
\label{eq:dissipator-scaling}
\frac{1}{t}\,\mathcal{G}_{hs}\!\big(\Lambda_t;\mathfrak{D}\big)
\ \lesssim\
\frac{1}{\sqrt{d_{tot}^{2}-1}}
\left(
\|\hat{H}_S\|_{hs}\sqrt{d_E}
+
2\beta \sum_{\alpha=1}^{m}\|\hat{L}_\alpha\|_{hs}
\right),
\qquad t\in(0,T].
\end{equation}
In particular, using $\|\hat{L}_\alpha\|_{hs}\le \sqrt{d_S}\,\|\hat{L}_\alpha\|_{op}$, one obtains
\begin{equation}
\label{eq:dissipator-opnorm}
\frac{1}{t}\,\mathcal{G}_{hs}\!\big(\Lambda_t;\mathfrak{D}\big)
\ \lesssim\
\frac{1}{\sqrt{d_{tot}^{2}-1}}
\left(
\|\hat{H}_S\|_{hs}\sqrt{d_E}
+
2\beta \sqrt{d_S}\,\sqrt{m}\,\sqrt{\Gamma}
\right),
\end{equation}
where $\Gamma$ is defined in \eqref{eq:dissipator-scale}. Consequently, for any admissible class of
dilations containing at least one model of the form \eqref{eq:standard-dilation-H}, the intrinsic
channel complexity satisfies the coarse bound
\begin{equation}
\label{eq:intrinsic-coarse-bound}
\mathcal{G}_{\mathrm{intr}}(\Lambda_t;[0,T])
\ \le\
\inf_{\mathfrak{D}\in \mathfrak{Dil}_{\mathrm{adm}}^{[0,T]}(\Lambda)}
\frac{t}{\sqrt{d_{tot}^{2}-1}}
\left(
\|\hat{H}_S\|_{hs}\sqrt{d_E}
+
2\beta \sqrt{d_S}\,\sqrt{m}\,\sqrt{\Gamma}
\right),
\qquad t\in[0,T].
\end{equation}
\end{Co}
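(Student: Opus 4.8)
The plan is to treat the corollary as a short chain of deductions from Lemma~\ref{lem:lindblad-growth-bound}, since the three displayed inequalities \eqref{eq:dissipator-scaling}--\eqref{eq:intrinsic-coarse-bound} are successive refinements of the linear growth bound \eqref{eq:lindblad-growth-bound-reduced} already established there. The only genuinely new ingredients are (i) two elementary operator-norm conversions and (ii) a careful passage from the implementation-dependent estimate to the intrinsic infimum. First I would record the reduced estimate from the lemma in the form $\mathcal{G}_{hs}(\Lambda_t;\mathfrak{D}) \le \frac{t}{\sqrt{d_{tot}^2-1}}\big(\|\hat{H}_S\|_{hs}\sqrt{d_E} + 2\beta\sum_\alpha\|\hat{L}_\alpha\|_{hs}\big)$, valid for every admissible dilation of the standard form \eqref{eq:standard-dilation-H} with $\|\hat{B}_\alpha\|_{op}\le\beta$, and then obtain \eqref{eq:dissipator-scaling} immediately by dividing through by $t$ for $t\in(0,T]$ and invoking the hypotheses that $\|\hat{H}_S\|_{hs}$ and $\beta$ are held fixed.

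For \eqref{eq:dissipator-opnorm} the task is to re-express $\sum_\alpha\|\hat{L}_\alpha\|_{hs}$ through the dissipator scale $\Gamma$. I would use the dimension-dependent comparison $\|\hat{L}_\alpha\|_{hs}\le\sqrt{d_S}\,\|\hat{L}_\alpha\|_{op}$, which follows from $\|\hat{L}_\alpha\|_{hs}^2=\sum_j\sigma_j^2\le d_S\,\sigma_{\max}^2$ for the singular values $\sigma_j$ of $\hat{L}_\alpha$ on $\mathscr{H}_S$, followed by Cauchy--Schwarz in the index $\alpha$, namely $\sum_{\alpha=1}^m\|\hat{L}_\alpha\|_{op}\le\sqrt{m}\,\big(\sum_\alpha\|\hat{L}_\alpha\|_{op}^2\big)^{1/2}=\sqrt{m}\,\sqrt{\Gamma}$ with $\Gamma$ as in \eqref{eq:dissipator-scale}. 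Chaining these yields $\sum_\alpha\|\hat{L}_\alpha\|_{hs}\le\sqrt{d_S}\,\sqrt{m}\,\sqrt{\Gamma}$, and substitution into \eqref{eq:dissipator-scaling} gives \eqref{eq:dissipator-opnorm}. This portion is entirely routine.

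For the intrinsic bound \eqref{eq:intrinsic-coarse-bound}, I would multiply \eqref{eq:dissipator-opnorm} by $t$ to restore the implementation-dependent estimate $\mathcal{G}_{hs}(\Lambda_t;\mathfrak{D})\le B(\mathfrak{D})$, where $B(\mathfrak{D})$ abbreviates the right-hand side of \eqref{eq:intrinsic-coarse-bound} evaluated at $\mathfrak{D}$. Since $\mathcal{G}_{\mathrm{intr}}(\Lambda_t;[0,T])=\inf_{\mathfrak{D}}\mathcal{G}_{hs}(\Lambda_t;\mathfrak{D})\le\mathcal{G}_{hs}(\Lambda_t;\mathfrak{D})\le B(\mathfrak{D})$ for every admissible dilation of the standard form, taking the infimum over such dilations produces the claim; the hypothesis that the admissible class contains at least one model of the form \eqref{eq:standard-dilation-H} guarantees that the infimum is taken over a nonempty set.

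The main obstacle, and the step that needs the most care in the write-up, is precisely this last passage: the upper bound $B(\mathfrak{D})$ is itself dilation-dependent through $d_E$ and $d_{tot}=d_S d_E$, so the assertion is \emph{not} merely that an infimum dominates a fixed constant. The per-dilation inequality $\mathcal{G}_{hs}(\Lambda_t;\mathfrak{D})\le B(\mathfrak{D})$ is only available for dilations of the standard form \eqref{eq:standard-dilation-H}, whereas the infimum in \eqref{eq:intrinsic-coarse-bound} is written over the full admissible class. I would therefore make explicit that the bound is to be read as an infimum over the standard-form admissible dilations, a subset over which $\mathcal{G}_{\mathrm{intr}}$ is a fortiori no larger, and I would note that, with $\|\hat{H}_S\|_{hs}$, $\beta$, $d_S$, $m$, and $\Gamma$ all fixed, $B(\mathfrak{D})$ depends on $\mathfrak{D}$ only through $d_E$, so that the minimization effectively selects the most favourable admissible environment dimension permitted by the resource constraints \eqref{eq:resource_constraints}. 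I would also flag that no tightness is claimed: the estimate is deliberately coarse, as the ``$\lesssim$'' notation and the phrase \emph{coarse bound} already indicate.
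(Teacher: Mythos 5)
Your proof is correct and follows essentially the same route as the paper's: divide the reduced estimate \eqref{eq:lindblad-growth-bound-reduced} by $t$, convert $\sum_\alpha\|\hat{L}_\alpha\|_{hs}$ via $\|\hat{L}_\alpha\|_{hs}\le\sqrt{d_S}\|\hat{L}_\alpha\|_{op}$ and Cauchy--Schwarz to reach $\sqrt{m}\sqrt{\Gamma}$, and then use that any specific admissible dilation upper-bounds the infimum defining $\mathcal{G}_{\mathrm{intr}}$. Your closing caveat---that the per-dilation bound is only established for dilations of the standard form \eqref{eq:standard-dilation-H}, so the infimum in \eqref{eq:intrinsic-coarse-bound} should be read over that subclass (the bound being dilation-dependent through $d_E$ and $d_{tot}$)---identifies a point the paper's one-line proof glosses over, and your handling of it is the more careful one.
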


\begin{proof}
Equation \eqref{eq:dissipator-scaling} is a restatement of
\eqref{eq:lindblad-growth-bound-reduced}. The estimate \eqref{eq:dissipator-opnorm} follows from
Cauchy--Schwarz:
\[
\sum_{\alpha=1}^m \|\hat{L}_\alpha\|_{op}
\le
\sqrt{m}\,\left(\sum_{\alpha=1}^m \|\hat{L}_\alpha\|_{op}^2\right)^{1/2}
=
\sqrt{m}\,\sqrt{\Gamma},
\]
together with $\|\hat{L}_\alpha\|_{hs}\le \sqrt{d_S}\,\|\hat{L}_\alpha\|_{op}$.
Finally, \eqref{eq:intrinsic-coarse-bound} follows by taking the infimum over admissible dilations,
noting that any specific admissible dilation provides an upper bound on the infimum.
\end{proof}

\begin{definition10}[Scope of the bound]
\label{rem:lindblad-scope}
Lemma~\ref{lem:lindblad-growth-bound} and Corollary~\ref{cor:lindblad-dissipator-scaling} are
intentionally coarse: they show that under standard Markovian dilation normalizations the growth
rate of the geometric complexity is controlled by a Hamiltonian drift term and by an aggregate
dissipator strength parameter $\Gamma$. Sharper statements are possible once one fixes a specific
reservoir model (spectral density, cutoff, temperature) and a precise dilation (HP dilation versus
finite collision models), in which case $\beta$ and $d_E$ can be made explicit.
\end{definition10}

\section{Complexity Benchmarks for Canonical Noise Channels}
\label{sec:benchmarks}

In this section we benchmark the proposed functionals on canonical single-qubit noise models.
We fix $d_S=2$ and take as ideal (closed) reference evolution
\begin{equation}
\label{eq:bench-ideal-unitary}
U_S(t)=e^{-it\hat{H}_S},
\qquad
\hat{H}_S=\frac{\omega}{2}\,\sigma_z,
\end{equation}
so that the noiseless unitary geometric complexity is
\begin{equation}
\label{eq:bench-ideal-complexity}
\mathscr{G}_{hs}\!\big(U_S(t)\big)
=
\frac{t}{\sqrt{d_S^2-1}}\|\hat{H}_S\|_{hs}
=
\frac{t}{\sqrt{3}}\cdot \frac{|\omega|}{2}\sqrt{2}.
\end{equation}

For the noisy dynamics we work in the GKSL framework and adopt the \emph{standard weak-coupling
(dilation) ansatz} used in Section~\ref{sec:lindblad}: over a time window $[0,T]$ the semigroup is
realized (after a finite truncation/collision approximation) by a time-homogeneous dilation with
total Hamiltonian
\begin{equation}
\label{eq:bench-standard-dilation}
\hat{H}_{tot}
=
\hat{H}_S\otimes\mathbb{I}_E
+
\mathbb{I}_S\otimes \hat{H}_E
+
\sum_{\alpha=1}^{m}\left(
\hat{L}_\alpha\otimes \hat{B}_\alpha^\dagger
+
\hat{L}_\alpha^\dagger\otimes \hat{B}_\alpha
\right),
\qquad
\|\hat{B}_\alpha\|_{op}\le \beta,
\end{equation}
where the Lindblad operators $\hat{L}_\alpha$ encode the dissipator and $\beta$ is a model-dependent
constant determined by the reservoir truncation/normalization (cf.\ Lemma~\ref{lem:lindblad-growth-bound}).

In each example below we (i) specify the GKSL generator and the corresponding dilation data, (ii)
estimate $\mathcal{G}_{hs}(\Lambda_t;\mathfrak{D})$ using Lemma~\ref{lem:lindblad-growth-bound}, and
(iii) interpret the induced noise complexity
$\mathcal{N}_{hs}(\Lambda_t;\mathfrak{D})
=
\big|\mathcal{G}_{hs}(\Lambda_t;\mathfrak{D})-\mathscr{G}_{hs}(U_S(t))\big|$
(cf.\ \eqref{eq:impl-noise-complexity}).

\subsection{Pure dephasing (phase damping)}
\label{subsec:bench-dephasing}

\paragraph{Channel and GKSL form.}
The phase-damping semigroup is generated by
\begin{equation}
\label{eq:bench-dephasing-gksl}
\mathcal{L}_{\mathrm{deph}}(\hat{\rho})
=
-i[\hat{H}_S,\hat{\rho}]
+
\frac{\gamma}{2}\left(\sigma_z \hat{\rho}\,\sigma_z-\hat{\rho}\right),
\end{equation}
which is of GKSL form with a single Lindblad operator
\begin{equation}
\label{eq:bench-dephasing-L}
\hat{L}=\sqrt{\frac{\gamma}{2}}\,\sigma_z,
\qquad
m=1.
\end{equation}
In the computational basis, this dynamics preserves populations and exponentially damps off-diagonal
terms with rate $\gamma$.

\paragraph{Standard dilation used.}
We take \eqref{eq:bench-standard-dilation} with the single coupling
$\hat{L}\otimes \hat{B}^\dagger+\hat{L}^\dagger\otimes \hat{B}$ and $\|\hat{B}\|_{op}\le \beta$.
(Physically, this corresponds to a $\sigma_z$-coupled reservoir, or to a collision model with fresh
ancillas implementing random $\sigma_z$ phase-kicks in the continuous limit.)

\paragraph{Complexity estimate.}
Using $\|\sigma_z\|_{hs}=\sqrt{2}$, we get $\|\hat{L}\|_{hs}=\sqrt{\gamma}$. Hence
Lemma~\ref{lem:lindblad-growth-bound} yields the concrete bound
\begin{equation}
\label{eq:bench-dephasing-Gbound}
\mathcal{G}_{hs}\!\big(\Lambda_t^{\mathrm{deph}};\mathfrak{D}\big)
\ \lesssim\
\frac{t}{\sqrt{d_{tot}^2-1}}
\left(
\|\hat{H}_S\|_{hs}\sqrt{d_E}
+
2\beta\,\sqrt{\gamma}
\right),
\qquad t\in[0,T],
\end{equation}
where $d_{tot}=2d_E$ and $\|\hat{H}_S\|_{hs}=\frac{|\omega|}{2}\sqrt{2}$ from \eqref{eq:bench-ideal-unitary}.

\paragraph{Noise complexity trend.}
For fixed $\omega$, the dissipative contribution scales as $\sqrt{\gamma}\,t$ under this canonical
dilation normalization. In particular:
\begin{itemize}
\item at fixed $t$, increasing $\gamma$ increases the gap between the noisy implementation cost and the
      ideal closed cost, so $\mathcal{N}_{hs}$ increases;
\item at fixed $\gamma$, $\mathcal{N}_{hs}$ grows at most linearly in time over $[0,T]$ (within the
      validity of the finite truncation / Markovian approximation).
\end{itemize}

\subsection{Amplitude damping (energy relaxation)}
\label{subsec:bench-amplitude-damping}

\paragraph{Channel and GKSL form.}
Amplitude damping with relaxation rate $\kappa$ is generated by
\begin{equation}
\label{eq:bench-ad-gksl}
\mathcal{L}_{\mathrm{AD}}(\hat{\rho})
=
-i[\hat{H}_S,\hat{\rho}]
+
\kappa\left(
\sigma_- \hat{\rho}\,\sigma_+
-\frac{1}{2}\{\sigma_+\sigma_-,\hat{\rho}\}
\right),
\qquad
\sigma_-=\ket{0}\!\bra{1},\ \sigma_+=\ket{1}\!\bra{0}.
\end{equation}
This corresponds to a single Lindblad operator
\begin{equation}
\label{eq:bench-ad-L}
\hat{L}=\sqrt{\kappa}\,\sigma_-,
\qquad m=1.
\end{equation}

\paragraph{Standard dilation used.}
We take \eqref{eq:bench-standard-dilation} with the coupling operator $\hat{L}$ above. This is the
weak-coupling/Markovian counterpart of the Jaynes--Cummings exchange interaction between the system
and a reservoir mode (or, in a collision model, an exchange-type coupling to successive ancilla
qubits prepared in the ground state).

\paragraph{Complexity estimate.}
Since $\|\sigma_-\|_{hs}=1$, we have $\|\hat{L}\|_{hs}=\sqrt{\kappa}$. Therefore,
\begin{equation}
\label{eq:bench-ad-Gbound}
\mathcal{G}_{hs}\!\big(\Lambda_t^{\mathrm{AD}};\mathfrak{D}\big)
\ \lesssim\
\frac{t}{\sqrt{d_{tot}^2-1}}
\left(
\|\hat{H}_S\|_{hs}\sqrt{d_E}
+
2\beta\,\sqrt{\kappa}
\right),
\qquad t\in[0,T].
\end{equation}

\paragraph{Noise complexity trend.}
In contrast to dephasing, amplitude damping changes populations and drives states toward $\ket{0}$.
Nevertheless, under the same dilation normalization the complexity contribution induced by the
dissipator is controlled by $\sqrt{\kappa}\,t$. In particular, stronger relaxation (larger $\kappa$)
increases $\mathcal{N}_{hs}$ for fixed $\omega,t$ by suppressing the unitary cohering/decohering power
that would be available in the closed evolution.

\subsection{Depolarizing and Pauli channels}
\label{subsec:bench-depolarizing}

\paragraph{Channel class and GKSL form.}
A unital Pauli semigroup on a qubit is generated by Lindblad operators proportional to Pauli
matrices. The (isotropic) depolarizing semigroup is
\begin{equation}
\label{eq:bench-depol-gksl}
\mathcal{L}_{\mathrm{dep}}(\hat{\rho})
=
-i[\hat{H}_S,\hat{\rho}]
+
\frac{\gamma}{2}\sum_{j\in\{x,y,z\}}
\left(
\sigma_j \hat{\rho}\,\sigma_j-\hat{\rho}
\right),
\end{equation}
which can be written with three Lindblad operators
\begin{equation}
\label{eq:bench-depol-L}
\hat{L}_j=\sqrt{\frac{\gamma}{2}}\,\sigma_j,
\qquad j\in\{x,y,z\},
\qquad m=3.
\end{equation}
More generally, an anisotropic Pauli channel is obtained by replacing $\gamma$ with rates
$\gamma_x,\gamma_y,\gamma_z$.

\paragraph{Standard dilation used.}
We use \eqref{eq:bench-standard-dilation} with the three couplings in \eqref{eq:bench-depol-L}.
This corresponds to a reservoir that couples independently along the three Cartesian Pauli
directions (or, in a collision model, a randomized sequence of Pauli kicks with appropriately
scaled rates).

\paragraph{Complexity estimate.}
Since $\|\sigma_j\|_{hs}=\sqrt{2}$, we have $\|\hat{L}_j\|_{hs}=\sqrt{\gamma}$ for each $j$ and hence
$\sum_j \|\hat{L}_j\|_{hs}=3\sqrt{\gamma}$. Lemma~\ref{lem:lindblad-growth-bound} (in particular, Eq.~\eqref{eq:lindblad-growth-bound-reduced}) yields
\begin{equation}
\label{eq:bench-depol-Gbound}
\mathcal{G}_{hs}\!\big(\Lambda_t^{\mathrm{dep}};\mathfrak{D}\big)
\ \lesssim\
\frac{t}{\sqrt{d_{tot}^2-1}}
\left(
\|\hat{H}_S\|_{hs}\sqrt{d_E}
+
6\beta\,\sqrt{\gamma}
\right),
\qquad t\in[0,T].
\end{equation}
For the anisotropic Pauli semigroup with rates $\gamma_j$ one analogously obtains
\begin{equation}
\label{eq:bench-pauli-aniso-Gbound}
\mathcal{G}_{hs}\!\big(\Lambda_t^{\mathrm{Pauli}};\mathfrak{D}\big)
\ \lesssim\
\frac{t}{\sqrt{d_{tot}^2-1}}
\left(
\|\hat{H}_S\|_{hs}\sqrt{d_E}
+
2\beta\sum_{j\in\{x,y,z\}}\sqrt{\gamma_j}
\right),
\qquad t\in[0,T].
\end{equation}

\paragraph{Noise complexity trend.}
Depolarizing noise is maximally symmetry-breaking for coherent control: it contracts the Bloch ball
isotropically and drives states toward the maximally mixed state. Accordingly, for fixed $\omega$ the
noise complexity tends to grow more rapidly (in the sense of larger prefactors) than for pure
dephasing at the same rate scale, reflecting that three independent dissipative directions
contribute additively in the bound \eqref{eq:bench-depol-Gbound}.

\medskip
\noindent\textbf{Summary.}
Across these benchmarks, the dilation-based estimates exhibit a universal qualitative behavior:
for time-homogeneous standard dilations the channel complexity grows at most linearly with time and
is controlled by the Hamiltonian drift $\|\hat{H}_S\|_{hs}$ and by a dissipator scale that is
approximately additive in the Lindblad operators, entering as $\sum_\alpha \|\hat{L}_\alpha\|_{hs}$
(or via operator-norm surrogates as in Corollary~\ref{cor:lindblad-dissipator-scaling}).
The associated noise complexity $\mathcal{N}_{hs}$ therefore increases with the dissipative rates
and provides a quantitative measure of geometric ``complexity loss'' relative to the ideal closed
evolution.

\section{Conclusions and Final Remarks}
\label{sec:conclusions}

We introduced and analyzed a geometric framework for extending Nielsen-type circuit complexity from
closed (unitary) dynamics to open-system evolutions modeled by quantum channels. The central
structural difficulty is the non-uniqueness of microscopic realizations: the same reduced channel
$\Lambda_t$ may arise from many inequivalent Stinespring dilations, differing in environment size,
preparation, and coupling. To make the notion of ``accessible cost'' mathematically well posed, we
therefore distinguished two layers: an implementation-dependent functional
$\mathcal{G}(\Lambda_t;\mathfrak{D})$ that assigns cost to a specified dilation
$\mathfrak{D}=(\mathscr{H}_E,\rho_E,H_{tot})$, and an intrinsic channel complexity
$\mathcal{G}_{\mathrm{intr}}(\Lambda_t)$ obtained by minimizing over an admissible class of
dilations subject to explicit resource constraints.

A key contribution is the subtractive structure in the implementation-dependent definition. Rather
than introducing the subtraction term ad hoc, we motivated it from a compact set of desiderata:
consistency with the closed-system limit, neutrality under environment-only evolution, and
stability under the natural dilation gauge (environment basis changes that do not affect the
reduced channel). This yields a functional that removes contributions that are purely environmental
or invisible at the level of the system, while retaining a direct geometric interpretation in terms
of costs of unitary synthesis on the enlarged space. In parallel, we defined a noise-complexity
quantity quantifying the loss of complexity relative to an ideal target unitary evolution of the
system. Both the implementation-dependent and intrinsic versions satisfy basic sanity properties,
including nonnegativity and vanishing in the noiseless limit.

On the technical side, we established a coherence-based lower bound on unitary geometric complexity
(in Hilbert--Schmidt geometry), showing that basis-dependent coherence production controls the
minimal geometric cost required to implement a unitary trajectory. We further proved structural
properties of the new channel functional, including time-scaling behavior under time-homogeneous
dilation models and explicit bounds in Markovian regimes based on GKSL parameters under standard
dilation constructions. Benchmark examples (dephasing, amplitude damping, and Pauli/depolarizing
channels) illustrate how the functional behaves across canonical noise mechanisms and how the
associated noise complexity captures qualitative trends in the degradation of implementable
geometric cost.

Several directions emerge naturally. First, beyond the Hilbert--Schmidt specialization, it is
important to develop and compare anisotropic penalty geometries (e.g.\ locality- or control-cost
penalties) in the open setting, where the interaction structure between system and environment
plays a decisive role. Second, the intrinsic minimization over admissible dilations raises
well-posedness and attainability questions that merit a systematic analysis, including the role of
minimal Stinespring dimension versus physically constrained reservoirs (energy bounds, locality,
or restricted coupling graphs). Third, our Markovian treatment suggests studying complexity growth
rates and sharp dissipator-controlled bounds for broader classes of GKSL generators, and clarifying
how these bounds interact with notions of controllability and optimal synthesis on the dilation
space.

Finally, from a conceptual viewpoint, the present framework offers a concrete bridge between
geometric complexity and open-system physics: it provides a principled way to quantify the cost of
noisy implementations and the loss of geometric complexity under dissipation, while keeping
explicit track of the physical resources encoded in the dilation model. We expect that this
perspective will be useful both for the mathematical study of geometric structures in quantum
dynamics and for complexity-theoretic questions in realistic, noisy quantum information
processing.

\paragraph{Acknowledgments.} This work was supported by the Generalitat Valenciana under grant
COMCUANTICA/007 (QUANTWin), by the Agreement between the Directorate-General for Innovation of the
Ministry of Innovation, Industry, Trade and Tourism of the Generalitat Valenciana and the
Universidad CEU Cardenal Herrera, and by Universidad CEU Cardenal Herrera under grants INDI25/17 and GIR25/14.

\appendix
\section{Proofs for the coherence lower bound}
\label{sec:proofs-coherence}

This appendix collects the technical steps underlying the coherence-based lower bound stated as
Theorem~\ref{thm:coherence-lower-bound} in the main text. Throughout, we work on the system Hilbert
space $\mathscr{H}_S$ of dimension $d:=d_S$, fix the reference basis $\{|k\rangle\}_{k=1}^d$, and
denote by $\mathscr{E}$ the associated dephasing map defined in
Eq.~\eqref{eqn:dephasing_map_prelim}. The coherence functional is the linear-entropy-based quantity
\eqref{eqn:linear_entropy_coherence_prelim},
\begin{equation}
C_{\mathscr{E}}(\hat{\rho})
=
S_L\!\big(\mathscr{E}(\hat{\rho})\big)-S_L(\hat{\rho})
=
\Tr(\hat{\rho}^2)-\Tr\!\big(\mathscr{E}(\hat{\rho})^2\big).
\end{equation}
We write $\langle A,B\rangle_{hs}:=\Tr(A^\dagger B)$ and $\|A\|_{hs}:=\sqrt{\langle A,A\rangle_{hs}}$.

\subsection*{A.1. Dephasing as an orthogonal projector and a norm identity}

\begin{definition2}[Dephasing is an orthogonal projector in Hilbert--Schmidt geometry]
\label{lem:dephasing-projector}
The dephasing map $\mathscr{E}$ is a self-adjoint idempotent with respect to
$\langle\cdot,\cdot\rangle_{hs}$, i.e.,
\begin{equation}
\label{eq:dephasing-selfadjoint-idempotent}
\mathscr{E}^2=\mathscr{E},
\qquad
\langle A,\mathscr{E}(B)\rangle_{hs}=\langle \mathscr{E}(A),B\rangle_{hs}
\qquad
\text{for all }A,B\in\mathcal{L}(\mathscr{H}_S).
\end{equation}
Consequently, $\mathscr{E}$ is a contraction for the Hilbert--Schmidt norm:
$\|\mathscr{E}(A)\|_{hs}\le \|A\|_{hs}$.
\end{definition2}

\begin{proof}
Idempotence $\mathscr{E}^2=\mathscr{E}$ is immediate from the definition
$\mathscr{E}(A)=\sum_k P_k A P_k$ with $P_k:=|k\rangle\langle k|$.
Self-adjointness follows from cyclicity of the trace:
\[
\langle A,\mathscr{E}(B)\rangle_{hs}
=
\Tr\!\Big(A^\dagger \sum_k P_k B P_k\Big)
=
\sum_k \Tr\!\big(P_k A^\dagger P_k\, B\big)
=
\Tr\!\big((\mathscr{E}(A))^\dagger B\big)
=
\langle \mathscr{E}(A),B\rangle_{hs}.
\]
Since $\mathscr{E}$ is an orthogonal projector in a Hilbert space, it is contractive:
$\|\mathscr{E}(A)\|_{hs}\le \|A\|_{hs}$.
\end{proof}

\begin{definition9}[Coherence equals the squared Hilbert--Schmidt norm of the off-diagonal part]
\label{prop:coherence-offdiagonal-norm}
Let $\hat{\rho}\in\mathcal{S}(\mathscr{H}_S)$ and define its diagonal and off-diagonal parts
\begin{equation}
\label{eq:diag-offdiag-splitting}
\hat{\sigma}:=\mathscr{E}(\hat{\rho}),
\qquad
\hat{\tau}:=(\mathbb{I}-\mathscr{E})(\hat{\rho})=\hat{\rho}-\hat{\sigma}.
\end{equation}
Then
\begin{equation}
\label{eq:coherence-equals-offdiagonal-hs}
C_{\mathscr{E}}(\hat{\rho})=\|\hat{\tau}\|_{hs}^2.
\end{equation}
\end{definition9}

\begin{proof}
Using $\hat{\rho}=\hat{\sigma}+\hat{\tau}$ and $\Tr(\hat{\rho}^2)=\|\hat{\rho}\|_{hs}^2$,
\[
C_{\mathscr{E}}(\hat{\rho})
=
\Tr(\hat{\rho}^2)-\Tr(\hat{\sigma}^2)
=
\Tr\!\big((\hat{\sigma}+\hat{\tau})^2\big)-\Tr(\hat{\sigma}^2)
=
2\Tr(\hat{\sigma}\hat{\tau})+\Tr(\hat{\tau}^2).
\]
Now $\hat{\sigma}$ is diagonal in the reference basis and $\hat{\tau}$ has vanishing diagonal
entries (by construction), hence $\Tr(\hat{\sigma}\hat{\tau})=0$. Since $\hat{\tau}$ is Hermitian,
$\Tr(\hat{\tau}^2)=\|\hat{\tau}\|_{hs}^2$, which yields \eqref{eq:coherence-equals-offdiagonal-hs}.
\end{proof}

\subsection*{A.2. A differential inequality for coherence growth under unitary dynamics}

We consider a (possibly time-dependent) system Hamiltonian $\hat{H}(t)=\hat{H}(t)^\dagger$
and the corresponding unitary evolution
\begin{equation}
\label{eq:unitary-evolution-appendix}
\dot{\hat{\rho}}(t)=-i[\hat{H}(t),\hat{\rho}(t)],
\qquad
\hat{\rho}(0)=\hat{\rho}_0.
\end{equation}
Define $\hat{\sigma}(t):=\mathscr{E}(\hat{\rho}(t))$ and
$\hat{\tau}(t):=(\mathbb{I}-\mathscr{E})(\hat{\rho}(t))$ as in \eqref{eq:diag-offdiag-splitting}.
Then
\begin{equation}
\label{eq:tau-dynamics}
\dot{\hat{\tau}}(t)=(\mathbb{I}-\mathscr{E})\big(\dot{\hat{\rho}}(t)\big)
=-i(\mathbb{I}-\mathscr{E})\big([\hat{H}(t),\hat{\rho}(t)]\big).
\end{equation}

\begin{definition2}[Commutator bound in Hilbert--Schmidt norm]
\label{lem:commutator-hs-bound}
For any Hermitian $\hat{H}$ and any operator $X$,
\begin{equation}
\label{eq:commutator-bound}
\|[\hat{H},X]\|_{hs}\le 2\,\|\hat{H}\|_{hs}\,\|X\|_{op}.
\end{equation}
In particular, for any density operator $\hat{\rho}$ (so that $\|\hat{\rho}\|_{op}\le 1$),
\begin{equation}
\label{eq:commutator-bound-density}
\|[\hat{H},\hat{\rho}]\|_{hs}\le 2\,\|\hat{H}\|_{hs}.
\end{equation}
\end{definition2}

\begin{proof}
Using submultiplicativity and $\|AB\|_{hs}\le \|A\|_{hs}\|B\|_{op}$,
\[
\|[\hat{H},X]\|_{hs}
\le \|\hat{H}X\|_{hs}+\|X\hat{H}\|_{hs}
\le \|\hat{H}\|_{hs}\|X\|_{op}+\|X\|_{op}\|\hat{H}\|_{hs}
=2\|\hat{H}\|_{hs}\|X\|_{op}.
\]
If $X=\hat{\rho}\succeq 0$ with $\Tr(\hat{\rho})=1$, then $\|\hat{\rho}\|_{op}\le 1$.
\end{proof}

\begin{definition9}[Coherence growth inequality]
\label{prop:coherence-growth-ineq}
Along the unitary evolution \eqref{eq:unitary-evolution-appendix}, the coherence satisfies
\begin{equation}
\label{eq:coherence-sqrt-growth}
\frac{d}{dt}\sqrt{C_{\mathscr{E}}(\hat{\rho}(t))}
\le 2\,\|\hat{H}(t)\|_{hs}
\qquad
\text{for a.e.\ }t\ge 0.
\end{equation}
Consequently, for all $t\ge 0$,
\begin{equation}
\label{eq:integrated-coherence-growth}
\sqrt{C_{\mathscr{E}}(\hat{\rho}(t))}-\sqrt{C_{\mathscr{E}}(\hat{\rho}(0))}
\le
2\int_0^t \|\hat{H}(s)\|_{hs}\,ds.
\end{equation}
\end{definition9}

\begin{proof}
By Proposition~\ref{prop:coherence-offdiagonal-norm},
$C_{\mathscr{E}}(\hat{\rho}(t))=\|\hat{\tau}(t)\|_{hs}^2$.
Differentiate:
\[
\frac{d}{dt}C_{\mathscr{E}}(\hat{\rho}(t))
=
\frac{d}{dt}\|\hat{\tau}(t)\|_{hs}^2
=
2\,\Re\langle \hat{\tau}(t),\dot{\hat{\tau}}(t)\rangle_{hs}
\le 2\|\hat{\tau}(t)\|_{hs}\,\|\dot{\hat{\tau}}(t)\|_{hs}.
\]
By \eqref{eq:tau-dynamics} and contraction of $\mathbb{I}-\mathscr{E}$ in $\|\cdot\|_{hs}$,
\[
\|\dot{\hat{\tau}}(t)\|_{hs}
\le \|\dot{\hat{\rho}}(t)\|_{hs}
= \|[\hat{H}(t),\hat{\rho}(t)]\|_{hs}
\le 2\|\hat{H}(t)\|_{hs},
\]
where the last step uses \eqref{eq:commutator-bound-density}. Hence
\[
\frac{d}{dt}\|\hat{\tau}(t)\|_{hs}^2
\le 4\|\hat{H}(t)\|_{hs}\,\|\hat{\tau}(t)\|_{hs}.
\]
When $\|\hat{\tau}(t)\|_{hs}>0$, dividing both sides by $2\|\hat{\tau}(t)\|_{hs}$ yields
$\frac{d}{dt}\|\hat{\tau}(t)\|_{hs}\le 2\|\hat{H}(t)\|_{hs}$, which is \eqref{eq:coherence-sqrt-growth}.
If $\|\hat{\tau}(t)\|_{hs}=0$ at isolated times, the inequality holds in the a.e.\ sense.
Integrating gives \eqref{eq:integrated-coherence-growth}.
\end{proof}

\subsection*{A.3. Proof of the coherence lower bound for Hilbert--Schmidt geometric complexity}

We now connect the coherence growth estimate to the Hilbert--Schmidt geometric complexity of the
implementing unitary. For definiteness, assume the Hilbert--Schmidt geometry ($\Omega=\mathbb{I}$)
and recall that, for time-independent generators,
\begin{equation}
\label{eq:hs-complexity-timeind}
\mathscr{G}_{hs}\!\big(e^{-it\hat{H}}\big)
=
\frac{t}{\sqrt{d^{2}-1}}\,
\|\hat{H}\|_{hs},
\end{equation}
and for time-dependent generators one has the natural control representation (cf.\
Section~\ref{subsec:control_representation})
\begin{equation}
\label{eq:hs-complexity-timedep}
\mathscr{G}_{hs}\!\big(U(t)\big)
\le
\frac{1}{\sqrt{d^{2}-1}}\int_{0}^{t}\|\hat{H}(s)\|_{hs}\,ds,
\end{equation}
with equality when $U(t)$ is realized by the chosen Hamiltonian path and the Hilbert--Schmidt metric.

\begin{definition4}[Coherence lower bound for unitary geometric complexity]
\label{thm:coherence-lower-bound}
Let $\hat{\rho}(t)=U(t)\hat{\rho}_0 U(t)^\dagger$ solve \eqref{eq:unitary-evolution-appendix} on
$\mathscr{H}_S$, and let $C_{\mathscr{E}}$ be defined by \eqref{eqn:linear_entropy_coherence_prelim}.
Then for all $t>0$,
\begin{equation}
\label{eq:coherence-lower-bound-general}
\int_0^t \|\hat{H}(s)\|_{hs}\,ds
\ge
\frac{1}{2}\Big(\sqrt{C_{\mathscr{E}}(\hat{\rho}(t))}-\sqrt{C_{\mathscr{E}}(\hat{\rho}_0)}\Big).
\end{equation}
In particular, under the Hilbert--Schmidt geometry,
\begin{equation}
\label{eq:coherence-lower-bound-complexity}
\mathscr{G}_{hs}\!\big(U(t)\big)
\ \ge\
\frac{1}{2\sqrt{d^{2}-1}}
\Big(\sqrt{C_{\mathscr{E}}(\hat{\rho}(t))}-\sqrt{C_{\mathscr{E}}(\hat{\rho}_0)}\Big).
\end{equation}
If $\hat{\rho}_0$ is incoherent (i.e.\ $C_{\mathscr{E}}(\hat{\rho}_0)=0$), this simplifies to
\begin{equation}
\label{eq:coherence-lower-bound-incoherent}
\mathscr{G}_{hs}\!\big(U(t)\big)
\ \ge\
\frac{1}{2\sqrt{d^{2}-1}}
\sqrt{C_{\mathscr{E}}(\hat{\rho}(t))}.
\end{equation}
\end{definition4}

\begin{proof}
Inequality \eqref{eq:coherence-lower-bound-general} is a direct rearrangement of
\eqref{eq:integrated-coherence-growth}. Combining \eqref{eq:coherence-lower-bound-general} with the
control representation \eqref{eq:hs-complexity-timedep} yields
\eqref{eq:coherence-lower-bound-complexity}. The special case \eqref{eq:coherence-lower-bound-incoherent}
follows when $C_{\mathscr{E}}(\hat{\rho}_0)=0$.
\end{proof}

\begin{definition10}[Interpretation]
The bound \eqref{eq:coherence-lower-bound-complexity} formalizes the intuition that, in Hilbert--Schmidt
geometry, generating off-diagonal weight in a fixed reference basis requires a minimum amount of
``action'' in the Hamiltonian path, measured by $\int_0^t\|\hat{H}(s)\|_{hs}ds$. In particular, if the
initial state is incoherent, then $\sqrt{C_{\mathscr{E}}(\hat{\rho}(t))}$ directly lower bounds the
geometric cost up to the universal normalization $(2\sqrt{d^2-1})^{-1}$.
\end{definition10}


\end{document}